\documentclass[journal]{IEEEtran}
\usepackage{amsmath,amssymb,amsthm,mathtools}
\usepackage{cite}
\usepackage{graphicx}

\usepackage[hidelinks]{hyperref}

\newtheorem{lemma}{Lemma}
\newtheorem{theorem}{Theorem}
\newtheorem{corollary}{Corollary}
\theoremstyle{remark}
\newtheorem{remark}{Remark}

\newcommand{\RegSDAC}[1]{\operatorname{Reg}_{#1}^{\mathrm{SDAC}}}

\title{Online Constrained Control of Storage Systems via Simplex Disturbance-Action
Policies}

\author{Kamiar~Asgari and Michael~J.~Neely%
\thanks{K.~Asgari and M.~J.~Neely are with the Department of Electrical and
Computer Engineering, University of Southern California, Los Angeles, CA 90089 USA
(e-mail: kamiaras@usc.edu; mjneely@usc.edu).}%
\thanks{Corresponding author: K.~Asgari.}%
}

\begin{document}
\maketitle

\begin{abstract}
We study online control of a scalar storage system with nonnegative adversarial
resource arrivals and state-dependent action constraints. The adversarial,
time-varying cost depends on both state and action. At each time, the
controller selects a feasible action before the current resource arrival and cost
are revealed. To handle the coupling between feasibility and online learning,
we introduce \emph{Simplex Disturbance-Action Control} (SDAC), whose policies
have $H$ nonnegative parameters summing to at most one. Every fixed SDAC policy
is feasible by construction. Our online SDAC controller updates these
parameters using entropic online mirror descent while preserving feasibility along
the time-varying trajectory. We prove a regret bound of
$O\!\left(\sqrt{T\log(H+1)}\right)$ relative to the best fixed SDAC policy and a
minimax lower bound showing that the dependence on $T$ is tight. We also introduce
infinite-memory SDAC policies, which include every feasible fixed-fraction policy.
SDAC policies approximate this class with an error that decreases geometrically
with $H$. For a suitable choice $H=\Theta(\log T)$,
the resulting regret against infinite-memory SDAC policies is
$O\!\left(\sqrt{T\log\log T}\right)$. The framework applies to energy-harvesting
batteries and other storage-constrained resource systems.
\end{abstract}

\begin{IEEEkeywords}
Online control, disturbance-action policies, storage systems, energy harvesting,
regret minimization, state-dependent action constraints.
\end{IEEEkeywords}

\section{Introduction}
\label{sec:introduction}

\subsection{Motivation and Problem Setting}
\IEEEPARstart{M}{any} resource-allocation systems have a storage state that changes
over time and limits the control action at each time. An action that lowers the
current cost may leave less resource for future decisions, while conserving the
resource may increase the current cost. The controller must balance these effects
while satisfying the action constraint at every time.

We study this problem for a scalar storage system that loses a fixed fraction of
its stored resource each period. At each time, the controller observes the current
state and all past arrivals and cost functions, chooses a control action, and then
observes the current arrival and cost function. The arrival and cost sequences may
be chosen adversarially, so we do not assume a known
probability model. Our goal is to construct an online SDAC controller whose
cumulative cost is close to that of the best fixed policy, chosen in hindsight,
from a specified benchmark class.

One difficulty is that the current action affects the next state. In online control,
\emph{Disturbance-Action Control} (DAC) parameterizes actions using a linear filter
of past disturbances, enabling efficient algorithms with bounded regret.
In our storage model, however, some DAC policies can produce negative actions or
actions larger than the available resource.

We address this issue by introducing \emph{Simplex Disturbance-Action Control}
(SDAC). SDAC restricts the filter parameters to a subprobability simplex and includes
powers of the retention coefficient in the filter. These choices ensure that
every fixed SDAC policy satisfies the action constraint. The state and action
along each fixed-policy trajectory are also affine in the policy parameter.
Thus, a convex state--action cost is convex in that parameter, allowing
updates through online convex optimization.

\subsection{Model and Information Pattern}

The storage state $x_t\ge0$ evolves according to
\[
x_{t+1}=\alpha x_t-u_t+w_t,
\]
where $\alpha\in(0,1)$ is a fixed retention coefficient, $u_t$ is the control
action, and $w_t\in[0,1]$ is a nonnegative resource arrival. The action must satisfy
\[
0\le u_t\le\alpha x_t.
\]
Thus, the controller cannot use more resource than remains after storage loss.

At time $t$, the controller chooses $u_t$ using the current state and the
information available through time $t-1$. The arrival $w_t$ and the convex cost
function $c_t$ are revealed after the action is selected, and the controller incurs
the cost $c_t(x_t,u_t)$. Both the online trajectory and every fixed-policy
trajectory used as a benchmark must satisfy the same dynamics and action
constraint. The formal model and assumptions are given in
Section~\ref{sec:model-setup}.

Our main application is energy-harvesting battery management, where $x_t$ is the
battery state of charge, $w_t$ is the harvested energy, $u_t$ is the energy
used, and $1-\alpha$ is the fraction of stored energy lost per period.
Similar dynamics arise in perishable inventory systems, water reservoirs with
storage loss, and queues with abandonment. In many standard queueing models,
$\alpha=1$; that case is outside our main results and is discussed in
Section~\ref{sec:extensions}.

\subsection{Contributions}

Our contributions are:
\begin{itemize}
\item \textbf{Policy class.}
    We introduce SDAC policies parameterized by a
    subprobability simplex. We prove that every such fixed policy is feasible
    for every arrival sequence with entries in $[0,1]$.

    We also introduce infinite-memory SDAC policies in
    Section~\ref{subsec:infinite-memory-sdac}. Their parameter domain extends the
    subprobability simplex to nonnegative sequences whose sum is at most one.
    This class contains every SDAC policy and every feasible
    fixed-fraction policy $u_t=kx_t$ with $k\in[0,\alpha]$.
    Truncating the parameter sequence gives an SDAC policy whose
    approximation error decreases geometrically with the memory horizon $H$.

\item \textbf{Online control and regret.}
    We develop an online SDAC controller that updates its parameter using
    entropic online mirror descent (OMD) over the subprobability simplex.
    We prove that the controller remains feasible and achieves
    $O\!\left(\sqrt{T\log(H+1)}\right)$
    regret against the best fixed SDAC policy.
    For a suitable choice $H=\Theta(\log T)$, the regret against
    infinite-memory SDAC policies is
    $O\!\left(\sqrt{T\log\log T}\right)$. The same bound
    holds against the best fixed-fraction policy.

\item \textbf{Minimax lower bound.}
    We prove a lower bound of order $\sqrt{T}$ for the expected regret of every
    causal feasible controller, including randomized controllers; see
    Theorem~\ref{thm:sdac-lower-bound}. When $T$ is at least on the order of
    $(1-\alpha)^{-1}$, the lower bound scales as
    $\Omega((1-\alpha)^{-1}\sqrt{T})$ near $\alpha=1$.
    The upper and lower bounds differ by a factor of $(1-\alpha)^{-1/2}$
    in their dependence on $1-\alpha$.
    Closing this gap is left for future work.
\end{itemize}

\subsection{Related Work}

\paragraph{Online control}
Agarwal et al.\ use
DAC policies to obtain $\widetilde O(\sqrt{T})$ regret against strongly
stable linear controllers \cite{agarwal2019online}. More closely related to
our simplex parameterization, Golowich et al.\ adapt DAC to population
dynamics and search over simplex-structured DAC policy parameters
\cite{golowich2024population}. Their state lies in a probability simplex,
and their feasible action set is independent of the current state. Here,
the constraint $0\le u_t\le\alpha x_t$ couples each action to the available
resource.

Li et al.\ consider affine constraints on the state and action
\cite{li2021onlineConstrained}. Their action constraint is independent of
the current state. They enforce feasibility by restricting and projecting
their policy updates onto a tightened parameter set. In our work, the
action produced by every fixed SDAC policy is feasible by construction.
When the parameter changes online, we clip the nominal action to the
current admissible interval as in \eqref{eq:alg-feasible-action}.

Liu et al.\ consider general convex constraint functions
\cite{liu2023constrainedControl}. For their adversarial constraints, the
guarantees bound cumulative violation but may permit violations at individual
times. Here,
\eqref{eq:model-feasibility} must hold at every time.

Kumar et al.\ introduce online convex optimization with unbounded memory
\cite{kumar2023unbounded}. For their discounted infinite-memory model, the
lower bound scales as $\sqrt{T}$ divided by one minus the discount factor. This
parallels our $\Omega((1-\alpha)^{-1}\sqrt{T})$ lower bound beyond the mixing
time, although ours applies directly to causal feasible
controllers under the state-dependent action constraint.

\paragraph{Energy harvesting}
Fixed-fraction and related allocation policies for stochastic energy arrivals
motivate our infinite-memory SDAC policies. Shaviv and \"Ozg\"ur
\cite{shaviv2016universally} allocate a fixed fraction of the available energy
under i.i.d.\ arrivals. Arafa et al.\ \cite{arafa2018online} extend fixed-fraction
policies to general concave utilities under i.i.d.\ arrivals and, for Bernoulli
arrivals, characterize a non-increasing allocation sequence following a full
recharge. Zibaeenejad and Chen \cite{zibaeenejad2019lookahead} derive an optimal
decreasing allocation sequence under Bernoulli arrivals when the controller
observes a fixed look-ahead window. Our infinite-memory SDAC policies generalize
this allocation structure. The class allows decreasing and non-increasing
allocation sequences as special cases; see Section~\ref{subsec:infinite-memory-sdac} and
Corollary~\ref{cor:fraction-as-sdac}.

Yu and Neely \cite{Yu2019} study expected utility maximization under i.i.d.\ system
states, while Asgari and Neely \cite{10.1145/3428337} obtain an
$O(\sqrt{T})$ regret--battery-capacity tradeoff for arbitrary loss sequences under
i.i.d.\ energy arrivals. Neither objective depends directly on the battery state.
Yu and Neely optimize expected time-average utility, whereas Asgari and Neely
measure regret against a fixed action. Lin et al.\ \cite{lin2025optimal} allow
arbitrary energy arrivals and time-varying age-of-information
objectives. They observe the current arrival before allocation and compare with an
unrestricted offline optimum through a competitive ratio. In our setting, the
controller acts before observing the current arrival and cost, the arrival and cost
sequences may be chosen adversarially, the cost depends on the storage state and
action, and the benchmark is a fixed SDAC policy chosen in hindsight.

\paragraph{Queueing and network optimization}
Classical queueing control studies throughput and queue stability through dynamic
service and scheduling decisions
\cite{tassiulas1990stability,tassiulas1993dynamic,mckeown1999achieving}, while
stochastic network optimization studies tradeoffs between time-average performance
and average queue backlog \cite{neely2010stochastic}. Our performance criterion is
finite-horizon regret against the best fixed SDAC policy in hindsight.

For their time-invariant class, Khojastepour and Sabharwal
\cite{khojastepour2004delay} represent schedulers satisfying a maximum-delay
constraint as fixed linear filters of the current and past arrivals, with
nonnegative coefficients summing to one rather than at most one as in SDAC.
Under i.i.d.\ arrivals, they identify the uniform moving-average filter as the
minimizer of long-run average transmit power over a Gaussian channel. The
connection is limited to the simplex-like filter structure: they solve a
stochastic average-power problem with a fixed cost, whereas our algorithm
updates the SDAC parameters online under adversarial arrival and cost sequences
with general convex state--action costs.

\paragraph{Inventory and reservoir management}
Yang et al.\ \cite{yang2020online} study online procurement under inventory
constraints and obtain optimal competitive ratios relative to an offline optimum.
They observe the current demand and price before acting, whereas our controller
acts before the current arrival and cost are revealed. Hihat and Fermanian
\cite{hihat2024online} learn a base-stock policy for more general inventory
dynamics but do not prove a regret bound for their method. For our scalar linear
model, we prove feasibility, approximation, and sublinear regret guarantees.
Classical stochastic reservoir models study storage under random inflows and fixed
release rules \cite{moran1954dams}, with an emphasis on steady-state behavior rather
than finite-horizon online regret.

\section{Model and Regret Benchmark}
\label{sec:model-setup}

For every positive integer $n$, we write
$[n]\triangleq\{1,\ldots,n\}$. For every $z\in\mathbb{R}$, we define
$[z]_+\triangleq\max\{z,0\}$. For
$\mathbf{v}\in\mathbb{R}^n$, the $i$-th coordinate is denoted by $v_i$. If the
vector already carries a subscript, the coordinate is appended after a comma: the
$i$-th coordinate of $\mathbf{v}_t$ is $v_{t,i}$. We write
$\mathbf{1}$ for the all-ones vector, with dimension clear from context, and
$\langle\cdot,\cdot\rangle$ for the Euclidean inner product. We use $\log$ for the
natural logarithm.

Fix a positive integer $T$ and a known retention coefficient
$\alpha\in(0,1)$. The state $x_t\in\mathbb{R}_{\ge0}$ is the resource level at the
beginning of time $t$. The system starts empty:
\begin{equation}
\label{eq:model-init}
x_1=0.
\end{equation}
At each time $t\in[T]$, the controller chooses an action
$u_t\in\mathbb{R}_{\ge0}$ satisfying
\begin{equation}
\label{eq:model-feasibility}
0\le u_t\le\alpha x_t.
\end{equation}
The system then receives an arrival $w_t\in[0,1]$ and evolves according to
\begin{equation}
\label{eq:model-dynamics}
x_{t+1}=\alpha x_t-u_t+w_t.
\end{equation}
Here, $1-\alpha$ is the fraction of stored resource lost per period.
The constraint \eqref{eq:model-feasibility} ensures that the action does not exceed
the amount remaining after this loss.

In the online-control literature, $w_t$ is commonly called a disturbance. Here, we
call it an \emph{arrival} because it is nonnegative, while retaining the standard
name Disturbance-Action Control. Nonnegativity is important for both the storage
interpretation and the
feasibility properties of SDAC.

At time $t$, the controller chooses $u_t$ before observing $w_t$ or $c_t$. After the
action is selected, the arrival and the entire cost function are revealed, the
controller incurs $c_t(x_t,u_t)$, and the state is updated according to
\eqref{eq:model-dynamics}.

For every feasible trajectory, \eqref{eq:model-dynamics} and $w_t\in[0,1]$ give
\[
0\le x_{t+1}\le\alpha x_t+1.
\]
Since $x_1=0$, induction gives
\[
0\le x_t
\le\frac{1-\alpha^{t-1}}{1-\alpha}
\le\frac{1}{1-\alpha},
\qquad t\in[T+1].
\]
Together with \eqref{eq:model-feasibility}, this bound places every feasible
state--action pair in the compact set
\begin{equation}
\label{eq:model-domain}
\mathcal{D}_\alpha
\triangleq
\left\{
(x,u):
0\le x\le\frac{1}{1-\alpha},
0\le u\le\alpha x
\right\}.
\end{equation}

\begin{remark}[Storage capacity]
A storage capacity of $1/(1-\alpha)$ suffices to avoid overflow.
\end{remark}

Each revealed cost function $c_t:\mathbb{R}_{\ge0}^2\to\mathbb{R}$ is convex in
$(x,u)$ and satisfies the following Lipschitz condition on $\mathcal{D}_\alpha$:
there are constants $L_x,L_u>0$ such that
\begin{equation}
\label{eq:model-lipschitz}
|c_t(x,u)-c_t(x',u')|
\le
L_x|x-x'|+L_u|u-u'|
\end{equation}
for all $(x,u),(x',u')\in\mathcal{D}_\alpha$.
At every $(x,u)\in\mathcal{D}_\alpha$, we also assume access to a cost subgradient
\begin{equation}
\label{eq:model-cost-subgradient}
\boldsymbol{\zeta}_t(x,u)
=
\bigl(\zeta_t^x(x,u),\zeta_t^u(x,u)\bigr)
\in\partial c_t(x,u)
\end{equation}
satisfying
\[
|\zeta_t^x(x,u)|\le L_x,
\qquad
|\zeta_t^u(x,u)|\le L_u.
\]
These subgradient bounds hold, for example, when $c_t$ has a convex
extension to an open neighborhood of $\mathcal{D}_\alpha$ satisfying
\eqref{eq:model-lipschitz}; see
\cite[Thm.~9.13]{rockafellar1998variational}.

We use $x_t$ and $u_t$ without parameter arguments for the trajectory generated by
the online controller. Its cumulative cost is
\begin{equation}
\label{eq:def-cumulative-cost}
J_T^{\mathrm{on}}
\triangleq
\sum_{t=1}^T c_t(x_t,u_t).
\end{equation}

Throughout, SDAC policies have finite memory unless explicitly described as
infinite-memory SDAC policies.
The parameter domain $\mathcal{M}_H$ and the infinite-memory
parameter domain $\mathcal{M}_\infty$ are defined in Section~\ref{sec:sdac}.
For a fixed parameter $\mathbf{m}$, the corresponding \emph{fixed-policy
trajectory} is the unique state--action
sequence $\{(x_t(\mathbf{m}),u_t(\mathbf{m}))\}_{t=1}^T$ obtained by applying
the SDAC policy with parameter $\mathbf{m}$ at every time.
Its cumulative cost is
\[
J_T(\mathbf{m})
\triangleq
\sum_{t=1}^T c_t\bigl(x_t(\mathbf{m}),u_t(\mathbf{m})\bigr).
\]
For a fixed positive integer $H$, the regret against the best fixed
SDAC policy is
\begin{equation}
\label{eq:def-regret}
\RegSDAC{T}(H)
\triangleq
J_T^{\mathrm{on}}
-\min_{\mathbf{m}\in\mathcal{M}_H}J_T(\mathbf{m}).
\end{equation}
Our objective is to satisfy \eqref{eq:model-feasibility} at every time and achieve
sublinear regret. We use the same trajectory and cost notation for
infinite-memory SDAC policies indexed by $\mathbf{m}\in\mathcal{M}_\infty$.

For fixed-fraction policies, $x_t(k)$ and $u_t(k)$ denote the state and action
induced by $u_t(k)=kx_t(k)$ for $k\in[0,\alpha]$.

\section{Simplex Disturbance-Action Control (SDAC)}
\label{sec:sdac}

\subsection{From Disturbance-Action Control to SDAC}
DAC uses a finite linear filter of past disturbances. Under standard conditions, this
parameterization yields convex surrogate costs and efficient online updates. In
our scalar storage model, DAC has the following form:
\begin{equation}
\label{eq:dac-param}
u_t = \sum_{i=1}^{H} \theta_i w_{t-i},
\end{equation}
where $H$ is a positive integer (the memory horizon) and
$\boldsymbol{\theta}=(\theta_1,\ldots,\theta_H)^\top\in\mathbb{R}^H$ is the vector
of filter coefficients.

For our constrained model, an unrestricted filter does not ensure feasibility: its
action can exceed $\alpha x_t$ or become negative.
We therefore introduce \emph{Simplex Disturbance-Action Control} (SDAC), which uses a
simplex-constrained filter tailored to the storage dynamics. This restriction makes
every fixed SDAC policy feasible while keeping surrogate costs convex and online
updates efficient.

\subsection{Policies and Structural Properties}
Fix a memory horizon $H\in\mathbb{N}_{>0}$ and define the subprobability simplex
\begin{equation}
\label{eq:def-simplex}
\mathcal{M}_H
\triangleq
\left\{
\mathbf{m}\in\mathbb{R}_{\ge0}^H:
\|\mathbf{m}\|_1\le1
\right\}.
\end{equation}
With the convention $w_\tau\equiv0$ for $\tau\le0$, each
$\mathbf{m}\in\mathcal{M}_H$ defines an SDAC policy through
\begin{equation}
\label{eq:def-sdac-policy}
u_t(\mathbf{m})
\;\triangleq\;
\sum_{i=1}^{H} m_i\alpha^i w_{t-i}
=
\langle\boldsymbol{\phi}_t^u,\mathbf{m}\rangle,
\end{equation}
where
\begin{equation}
\label{eq:def-phi-u}
\boldsymbol{\phi}_t^u
\triangleq
\bigl(\alpha w_{t-1},\alpha^2 w_{t-2},\ldots,\alpha^H w_{t-H}\bigr)^\top
\in\mathbb{R}^H.
\end{equation}
Here, $m_i\alpha^i w_s$ is the amount from arrival $w_s$ used $i$ periods later.

The following state decomposition yields two structural properties: every
fixed SDAC policy satisfies the feasibility constraint
\eqref{eq:model-feasibility}, and its induced cost is convex and Lipschitz in
$\mathbf{m}$. Omitted proofs appear in
Appendix~\ref{sec:appendix-sdac-proofs}.

\begin{lemma}[State decomposition under SDAC]
\label{lem:sdac-state-decomp}
Fix a positive integer $H$.
For $t\in[T]$ and $i\in[H]$, define
\[
r_t(i)
\triangleq
\sum_{j=1}^{i}\alpha^{j-1}w_{t-j},
\qquad
r_t
\triangleq
\sum_{j=1}^{t-1}\alpha^{j-1}w_{t-j},
\]
where the empty sum defining $r_1$ is zero. Then, for every
$\mathbf{m}\in\mathcal{M}_H$ and $t\in[T]$, the fixed-policy state associated
with $\mathbf{m}$ satisfies
\begin{equation}
\label{eq:sdac-state-decomp}
x_t(\mathbf{m})
=
\sum_{i=1}^H m_i r_t(i)
+\bigl(1-\sum_{i=1}^H m_i\bigr)r_t.
\end{equation}
Equivalently, with
\[
\boldsymbol{\phi}_t^x
\triangleq
\bigl(r_t(1)-r_t,\ldots,r_t(H)-r_t\bigr)^\top,
\]
we have
\[
x_t(\mathbf{m})
=
r_t+\langle\boldsymbol{\phi}_t^x,\mathbf{m}\rangle.
\]
Thus, $\mathbf{m}\mapsto x_t(\mathbf{m})$ is affine.
\end{lemma}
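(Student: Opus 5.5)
Unrolling the dynamics \eqref{eq:model-dynamics} from $x_1=0$ under the fixed policy \eqref{eq:def-sdac-policy} gives a closed form for $x_t(\mathbf{m})$ that can be compared term by term with the right-hand side of \eqref{eq:sdac-state-decomp}. Since $x_1=0$, for every $t\in[T]$ the unrolled state is
\[
x_t(\mathbf{m})=\sum_{s=1}^{t-1}\alpha^{t-1-s}\bigl(w_s-u_s(\mathbf{m})\bigr).
\]
This is a purely algebraic identity valid for any action sequence. Feasibility plays no role here, since the lemma only asserts the formula. The pure-arrival part equals $r_t$ after the change of index $j=t-s$.

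Next I substitute $u_s(\mathbf{m})=\sum_{i=1}^H m_i\alpha^i w_{s-i}$ and exchange sums, using $w_\tau=0$ for $\tau\le0$. The consumption term becomes
\[
\sum_{i=1}^H m_i\sum_{s=1}^{t-1}\alpha^{t-1-s+i}w_{s-i}.
\]
Setting $j=t-s+i$, the inner sum is $\sum_{j}\alpha^{j-1}w_{t-j}$ over $j$ from $i+1$ to $t-1+i$. Only the indices with $t-j\ge1$ contribute, that is $j\le t-1$. So the inner sum equals $\sum_{j=i+1}^{t-1}\alpha^{j-1}w_{t-j}$, which is $r_t-r_t(i)$ whenever $i\le t-1$.

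The main point to check is the boundary case $i\ge t$. There the inner sum is empty, so it is zero. In this case $r_t(i)$ as defined contains terms $w_{t-j}$ with $t-j\le0$, which vanish by convention, so $r_t(i)=r_t$ and again $r_t-r_t(i)=0$. I would state explicitly that the convention $w_\tau=0$ for $\tau\le0$ is used in the definition of $r_t(i)$. With this, $r_t(i)=r_t$ for $i\ge t-1$ and the inner sum equals $r_t-r_t(i)$ for all $i\in[H]$.

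Combining the two parts,
\[
x_t(\mathbf{m})=r_t-\sum_{i=1}^H m_i\bigl(r_t-r_t(i)\bigr)=\sum_{i=1}^H m_i r_t(i)+\Bigl(1-\sum_{i=1}^H m_i\Bigr)r_t.
\]
This is \eqref{eq:sdac-state-decomp}. Rewriting it as $r_t+\langle\boldsymbol{\phi}_t^x,\mathbf{m}\rangle$ is immediate. Since $r_t$ and $\boldsymbol{\phi}_t^x$ do not depend on $\mathbf{m}$, the map $\mathbf{m}\mapsto x_t(\mathbf{m})$ is affine.

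Alternatively, I could prove the formula by induction on $t$, checking that the candidate expression satisfies the recursion, using $r_{t+1}=\alpha r_t+w_t$ and $r_{t+1}(i)=w_t+\alpha r_t(i-1)$. I find the direct unrolling cleaner, because the boundary bookkeeping above is the only delicate step.
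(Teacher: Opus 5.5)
Your proof is correct, but it takes a different route from the paper. You unroll the dynamics into the convolution $x_t(\mathbf{m})=\sum_{s=1}^{t-1}\alpha^{t-1-s}\bigl(w_s-u_s(\mathbf{m})\bigr)$ and exchange sums, so you derive the decomposition rather than verify a guessed form. The paper instead uses induction on $t$. It checks that the right-hand side of \eqref{eq:sdac-state-decomp} satisfies the state recursion, using the one-step identities $r_{t+1}=w_t+\alpha r_t$ and $r_{t+1}(i)=w_t+\alpha r_t(i)-\alpha^i w_{t-i}$; the latter is the same as your $w_t+\alpha r_t(i-1)$.

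Your route shows that $r_t-r_t(i)=\sum_{j=i+1}^{t-1}\alpha^{j-1}w_{t-j}$ is the discounted consumption attributable to lag $i$. This is exactly the expression the paper later uses to bound $\|\boldsymbol{\phi}_t^x\|_\infty$ in the proof of Lemma~\ref{lem:sdac-surrogate}. The price is the summation exchange and the explicit treatment of the boundary indices where the inner sum is empty, which you handle correctly via $w_\tau=0$ for $\tau\le0$. Like your route, the paper's induction relies on the convention $w_\tau=0$ for $\tau\le0$ (for instance, to get $r_1(i)=0$), but it absorbs the boundary bookkeeping into the one-step identities. Its key quantity, $\alpha r_t(i)-\alpha^i w_{t-i}=\sum_{j=1}^{i-1}\alpha^j w_{t-j}\ge0$, is also the one that drives the feasibility proof of Lemma~\ref{lem:sdac-feasibility}.
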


\begin{lemma}[Feasibility of SDAC policies]
\label{lem:sdac-feasibility}
For every $\mathbf{m}\in\mathcal{M}_H$, the fixed SDAC policy
indexed by $\mathbf{m}$ is feasible:
\[
0\le u_t(\mathbf{m})\le \alpha x_t(\mathbf{m})
\qquad \text{for every }t\in[T].
\]
\end{lemma}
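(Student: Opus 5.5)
The plan is to use the state decomposition of Lemma~\ref{lem:sdac-state-decomp} to write $\alpha x_t(\mathbf{m})-u_t(\mathbf{m})$ as a sum of nonnegative terms.

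\textbf{Lower bound.} Nonnegativity of the action is immediate. Since $m_i\ge0$, $\alpha^i>0$, and $w_{t-i}\ge0$ (with $w_\tau=0$ for $\tau\le0$), every term in \eqref{eq:def-sdac-policy} is nonnegative, so $u_t(\mathbf{m})\ge0$.

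\textbf{Upper bound.} Write $s\triangleq\sum_{i=1}^H m_i\le1$. By \eqref{eq:sdac-state-decomp},
\[
\alpha x_t(\mathbf{m})=\sum_{i=1}^H m_i\,\alpha r_t(i)+(1-s)\,\alpha r_t .
\]
The second term is nonnegative because $1-s\ge0$ and $r_t\ge0$. It then suffices to show that $\alpha r_t(i)\ge\alpha^i w_{t-i}$ for each $i\in[H]$. This holds because
\[
\alpha r_t(i)=\sum_{j=1}^{i}\alpha^{j}w_{t-j},
\]
which contains the term $j=i$, namely $\alpha^i w_{t-i}$, and all its other terms are nonnegative. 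Multiplying by $m_i\ge0$ and summing over $i$ gives
\[
\alpha x_t(\mathbf{m})\ \ge\ \sum_{i=1}^H m_i\alpha^i w_{t-i}=u_t(\mathbf{m}).
\]

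\textbf{Boundary indices.} One needs $r_t(i)$ to be well defined when $t-j\le0$. The zero convention for $w_\tau$ covers this, so the inequality still holds; for example, at $t=1$ both sides vanish.

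\textbf{Possible circularity.} The argument relies on Lemma~\ref{lem:sdac-state-decomp}, which describes the trajectory generated by applying the SDAC action at every step. If that lemma were derived assuming feasibility, the two would be circular. However, the decomposition is purely algebraic: it follows by induction from \eqref{eq:model-dynamics} with $u_t=u_t(\mathbf{m})$. Hence the trajectory is well defined, feasibility is then verified a posteriori as above, and the state stays nonnegative. I do not expect a substantive obstacle; the only point needing care is this circularity, and it is resolved as described.
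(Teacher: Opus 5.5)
Your proposal is correct and takes essentially the same route as the paper. Both use the decomposition of Lemma~\ref{lem:sdac-state-decomp}, the termwise bound $\alpha r_t(i)-\alpha^i w_{t-i}=\sum_{j=1}^{i-1}\alpha^j w_{t-j}\ge0$, and the nonnegativity of $(1-\sum_i m_i)\alpha r_t$. Your remark on circularity also matches the paper: it proves the decomposition by a purely algebraic induction on the dynamics, without invoking feasibility.
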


\begin{proof}
Fix $t\in[T]$. For every $i\in[H]$,
\[
\alpha r_t(i)-\alpha^i w_{t-i}
=\sum_{j=1}^{i-1}\alpha^j w_{t-j}
\ge0.
\]
Since $\mathbf{m}\in\mathcal{M}_H$, we have $m_i\ge0$ for every $i\in[H]$ and
$1-\sum_{i=1}^H m_i\ge0$. Also, $w_{t-i}\ge0$ for every $i\in[H]$ and
\[
r_t=\sum_{j=1}^{t-1}\alpha^{j-1}w_{t-j}\ge0.
\]
Therefore, using \eqref{eq:sdac-state-decomp} in the last step,
\begin{align*}
0\le u_t(\mathbf{m})
&=\sum_{i=1}^H m_i\alpha^i w_{t-i}
\le\sum_{i=1}^H m_i\alpha r_t(i) \\
&\le\sum_{i=1}^H m_i\alpha r_t(i)
+\alpha\left(1-\sum_{i=1}^H m_i\right)r_t \\
&=\alpha x_t(\mathbf{m}).\qedhere
\end{align*}
\end{proof}

\begin{lemma}[Convexity and Lipschitz continuity of the surrogate cost]
\label{lem:sdac-surrogate}
For $t\in[T]$, define
\[
\ell_t(\mathbf{m})
\triangleq
c_t\bigl(x_t(\mathbf{m}),u_t(\mathbf{m})\bigr),
\qquad
\mathbf{m}\in\mathcal{M}_H,
\]
and
\begin{equation}
\label{eq:def-surrogate-G}
G_\alpha
\triangleq
\alpha\left(\frac{L_x}{1-\alpha}+L_u\right).
\end{equation}
Then $\ell_t$ is convex on $\mathcal{M}_H$ and, for all
$\mathbf{m},\mathbf{m}'\in\mathcal{M}_H$,
\[
|\ell_t(\mathbf{m})-\ell_t(\mathbf{m}')|
\le
G_\alpha\|\mathbf{m}-\mathbf{m}'\|_1.
\]
For $\mathbf{m}\in\mathcal{M}_H$, use \eqref{eq:model-cost-subgradient} to define
\begin{equation}
\label{eq:surr-subgradient}
\begin{aligned}
(\zeta_t^x,\zeta_t^u)
&=
\boldsymbol{\zeta}_t\bigl(x_t(\mathbf{m}),u_t(\mathbf{m})\bigr), \\
\mathbf{g}_t
&\triangleq
\zeta_t^x\boldsymbol{\phi}_t^x
+\zeta_t^u\boldsymbol{\phi}_t^u.
\end{aligned}
\end{equation}
Then
\[
\mathbf{g}_t\in\partial\ell_t(\mathbf{m}),
\qquad
\|\mathbf{g}_t\|_\infty\le G_\alpha.
\]
\end{lemma}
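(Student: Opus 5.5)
Since Lemma~\ref{lem:sdac-state-decomp} gives $x_t(\mathbf{m})=r_t+\langle\boldsymbol{\phi}_t^x,\mathbf{m}\rangle$ and \eqref{eq:def-sdac-policy} gives $u_t(\mathbf{m})=\langle\boldsymbol{\phi}_t^u,\mathbf{m}\rangle$, the map $A_t:\mathbf{m}\mapsto(x_t(\mathbf{m}),u_t(\mathbf{m}))$ is affine. By Lemma~\ref{lem:sdac-feasibility} and the bound $x_t\le 1/(1-\alpha)$, which holds along every feasible trajectory, $A_t$ sends $\mathcal{M}_H$ into $\mathcal{D}_\alpha$. So $\ell_t=c_t\circ A_t$ is a convex function composed with an affine map on the convex set $\mathcal{M}_H$, and it is therefore convex. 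No further argument is needed for convexity.

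For the subgradient claim, I will use the subgradient inequality of $c_t$ at $(x,u)=A_t(\mathbf{m})$ with the supplied subgradient \eqref{eq:model-cost-subgradient}. Take any $\mathbf{m}'\in\mathcal{M}_H$. Then
\[
\ell_t(\mathbf{m}')-\ell_t(\mathbf{m})\ge\zeta_t^x\bigl(x_t(\mathbf{m}')-x_t(\mathbf{m})\bigr)+\zeta_t^u\bigl(u_t(\mathbf{m}')-u_t(\mathbf{m})\bigr).
\]
By affinity the right-hand side equals $\langle\zeta_t^x\boldsymbol{\phi}_t^x+\zeta_t^u\boldsymbol{\phi}_t^u,\mathbf{m}'-\mathbf{m}\rangle$, so $\mathbf{g}_t\in\partial\ell_t(\mathbf{m})$ relative to $\mathcal{M}_H$.

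The main work is the coordinate bound $\|\mathbf{g}_t\|_\infty\le G_\alpha$. For this I need $|\phi^u_{t,i}|\le\alpha$ and $|\phi^x_{t,i}|\le\alpha/(1-\alpha)$. The first is immediate, since $0\le\alpha^i w_{t-i}\le\alpha$. For the second, compute
\[
r_t-r_t(i)=\sum_{j=i+1}^{t-1}\alpha^{j-1}w_{t-j}.
\]
This quantity is nonnegative, and since $w\le1$ it is at most $\sum_{j\ge i+1}\alpha^{j-1}=\alpha^i/(1-\alpha)\le\alpha/(1-\alpha)$ because $i\ge1$. The triangle inequality then gives
\[
|g_{t,i}|\le L_x\frac{\alpha}{1-\alpha}+L_u\alpha=G_\alpha.
\]
The factor $\alpha$ comes out correctly only because the tail sum starts at $j=i+1\ge2$; this is the step where care is needed.

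For the Lipschitz bound, I will not rely on subgradients, since convexity only gives one-sided inequalities. Instead I apply \eqref{eq:model-lipschitz} directly on $\mathcal{D}_\alpha$, which is valid because $A_t(\mathcal{M}_H)\subseteq\mathcal{D}_\alpha$. This gives
\[
|\ell_t(\mathbf{m})-\ell_t(\mathbf{m}')|\le L_x|\langle\boldsymbol{\phi}_t^x,\mathbf{m}-\mathbf{m}'\rangle|+L_u|\langle\boldsymbol{\phi}_t^u,\mathbf{m}-\mathbf{m}'\rangle|.
\]
Applying Hölder's inequality ($\ell_\infty$ against $\ell_1$) together with the coordinate bounds above yields at most $G_\alpha\|\mathbf{m}-\mathbf{m}'\|_1$.
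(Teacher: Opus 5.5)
Your proposal is correct and follows essentially the same route as the paper. The shared steps are the affine representation from Lemma~\ref{lem:sdac-state-decomp}, convexity by composition, and the subgradient inequality pulled back through the affine map. They also include the coefficient bounds $\|\boldsymbol{\phi}_t^x\|_\infty\le\alpha/(1-\alpha)$ and $\|\boldsymbol{\phi}_t^u\|_\infty\le\alpha$ from the tail sum starting at $j=i+1$, and the Lipschitz estimate from feasibility (so both pairs lie in $\mathcal{D}_\alpha$) plus H\"older's inequality. One point to make explicit: the bound on $|g_{t,i}|$ uses the assumed subgradient bounds $|\zeta_t^x|\le L_x$ and $|\zeta_t^u|\le L_u$ accompanying \eqref{eq:model-cost-subgradient}, not the Lipschitz condition itself.
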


\subsection{Approximation of Infinite-Memory SDAC Policies}
\label{subsec:infinite-memory-sdac}

Define the infinite-memory parameter domain
\begin{equation}
\label{eq:def-infinite-simplex}
\mathcal{M}_\infty
\triangleq
\left\{
\mathbf{m}=(m_i)_{i\ge1}:m_i\ge0,\quad
\sum_{i=1}^{\infty}m_i\le1
\right\}.
\end{equation}
Each $\mathbf{m}\in\mathcal{M}_\infty$ defines an infinite-memory SDAC policy
through
\begin{equation}
\label{eq:def-infinite-sdac-policy}
u_t(\mathbf{m})
\triangleq
\sum_{i=1}^{t-1}m_i\alpha^i w_{t-i}.
\end{equation}
The state $x_t(\mathbf{m})$ follows \eqref{eq:model-dynamics} with initial
condition \eqref{eq:model-init}.
After extending vectors in $\mathcal{M}_H$ by zeros, SDAC policies with horizon $H$
are exactly the infinite-memory SDAC policies with $m_i=0$ for every $i>H$.

\begin{lemma}[Approximation of infinite-memory SDAC policies]
\label{lem:sdac-truncation-approx}
Every infinite-memory SDAC policy is feasible for every arrival sequence with
entries in $[0,1]$. Fix $\mathbf{m}\in\mathcal{M}_\infty$ and an integer $H\ge1$,
and define its truncation by
\[
\mathbf{m}_H(\mathbf{m})\triangleq(m_1,\ldots,m_H)^\top.
\]
Then $\mathbf{m}_H(\mathbf{m})\in\mathcal{M}_H$. If each $c_t$ satisfies
\eqref{eq:model-lipschitz}, then, for every such arrival sequence,
\begin{equation}
\label{eq:sdac-truncation-cost-gap}
\begin{aligned}
&\left|J_T\bigl(\mathbf{m}_H(\mathbf{m})\bigr)-J_T(\mathbf{m})\right| \\
&\qquad\le
\left(L_u+\frac{L_x}{1-\alpha}\right)\alpha^{H+1}T.
\end{aligned}
\end{equation}
\end{lemma}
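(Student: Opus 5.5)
We will prove the three claims in order: feasibility of infinite-memory policies, membership of the truncation in $\mathcal{M}_H$, and the cost gap \eqref{eq:sdac-truncation-cost-gap}.

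\emph{Feasibility.} First we extend Lemma~\ref{lem:sdac-state-decomp} to infinite memory. Since $w_\tau=0$ for $\tau\le0$, on the horizon $[T]$ only $m_1,\ldots,m_{T-1}$ matter. Hence the policy $\mathbf{m}\in\mathcal{M}_\infty$ coincides on $[T]$ with the SDAC policy of horizon $H'=\max\{T-1,1\}$ and parameter $(m_1,\ldots,m_{H'})$. This parameter lies in $\mathcal{M}_{H'}$ because its entries are nonnegative and its partial sums are at most $\sum_i m_i\le1$. Lemma~\ref{lem:sdac-feasibility} then gives feasibility. The same observation shows $\mathbf{m}_H(\mathbf{m})\in\mathcal{M}_H$. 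It also yields the decomposition
\[
x_t(\mathbf{m})=r_t+\sum_{i=1}^{t-1}m_i\bigl(r_t(i)-r_t\bigr),
\]
where $r_t(i)=r_t$ for $i\ge t-1$.

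\emph{Per-step gaps.} Write $\mathbf{m}'=\mathbf{m}_H(\mathbf{m})$, extended by zeros. The two parameters differ only in coordinates $i>H$. For the action,
\[
|u_t(\mathbf{m})-u_t(\mathbf{m}')|=\sum_{i=H+1}^{t-1}m_i\alpha^i w_{t-i}\le\alpha^{H+1}\sum_{i>H}m_i\le\alpha^{H+1},
\]
using $w\le1$ and $\alpha^i\le\alpha^{H+1}$. For the state, the decomposition gives
\[
x_t(\mathbf{m})-x_t(\mathbf{m}')=\sum_{i>H}m_i\bigl(r_t(i)-r_t\bigr).
\]
Here
\[
|r_t(i)-r_t|=\sum_{j=i+1}^{t-1}\alpha^{j-1}w_{t-j}\le\frac{\alpha^i}{1-\alpha}\le\frac{\alpha^{H+1}}{1-\alpha}
\]
for $i\ge H+1$. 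So $|x_t(\mathbf{m})-x_t(\mathbf{m}')|\le\alpha^{H+1}/(1-\alpha)$. Both trajectories are feasible, so both pairs lie in $\mathcal{D}_\alpha$ and \eqref{eq:model-lipschitz} applies. It gives a per-step cost gap of at most $(L_u+L_x/(1-\alpha))\alpha^{H+1}$. Summing over $t\in[T]$ yields \eqref{eq:sdac-truncation-cost-gap}.

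\emph{Main obstacle.} The delicate step is the state comparison. A naive approach propagates the action gaps through the dynamics, $x_{t+1}-x'_{t+1}=\alpha(x_t-x'_t)-(u_t-u'_t)$. This accumulates a factor $\sum_s\alpha^{s}$ times the action-gap bound, which has the right order but requires care with signs. Using the affine decomposition instead gives the bound directly, with the tail geometric sum starting at index $i\ge H+1$. We will therefore rely on the decomposition, after checking that it holds verbatim for parameter vectors of length $t-1$, which is what the reduction to Lemma~\ref{lem:sdac-state-decomp} provides.
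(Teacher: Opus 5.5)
Your proof is correct, and its skeleton matches the paper's. You reduce feasibility to a finite-horizon SDAC policy via Lemma~\ref{lem:sdac-feasibility}; the paper truncates at length $T$, you at $\max\{T-1,1\}$. You then bound the action and state gaps at each $t$, apply \eqref{eq:model-lipschitz}, and sum. The one genuine difference is the state-gap step. You read it off the closed-form affine decomposition of Lemma~\ref{lem:sdac-state-decomp}, together with the coefficient bound $r_t-r_t(i)\le\alpha^i/(1-\alpha)$ already used in the proof of Lemma~\ref{lem:sdac-surrogate}. The paper instead defines $e_t^u\triangleq u_t(\mathbf{m})-u_t(\mathbf{m}_H(\mathbf{m}))$ and $e_t^x\triangleq x_t(\mathbf{m}_H(\mathbf{m}))-x_t(\mathbf{m})$ and observes that $e_t^u\ge0$. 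It then iterates $e_{t+1}^x=\alpha e_t^x+e_t^u$ from $e_1^x=0$ to get $0\le e_t^x\le(1-\alpha)^{-1}\sum_{i>H}\alpha^i m_i$. The two computations agree exactly: substituting $j=t-s+i$ in $\sum_{s=1}^{t-1}\alpha^{t-1-s}\sum_{i>H}\alpha^i m_i w_{s-i}$ gives $\sum_{i>H}m_i\sum_{j=i+1}^{t-1}\alpha^{j-1}w_{t-j}=\sum_{i>H}m_i\bigl(r_t-r_t(i)\bigr)$. So your remark that the recursion ``requires care with signs'' and gives only ``the right order'' undersells it. With the sign conventions above nothing is delicate, it yields the identical constant $\alpha^{H+1}/(1-\alpha)$, and it needs no extension of the decomposition to infinite memory. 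Your route, in exchange, reuses results already proved and shows directly how much each tail coordinate $m_i$, $i>H$, contributes to the state gap.
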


\begin{corollary}[Fixed-fraction policies as infinite-memory SDAC policies]
\label{cor:fraction-as-sdac}
For $k\in[0,\alpha]$, define
\[
m_i(k)=\frac{k}{\alpha}\left(1-\frac{k}{\alpha}\right)^{i-1},
\qquad i\ge1,
\]
where $0^0=1$.
Then $\mathbf{m}(k)\in\mathcal{M}_\infty$ and, for every arrival sequence with
entries in $[0,1]$, the infinite-memory SDAC policy indexed by $\mathbf{m}(k)$
coincides with the feasible fixed-fraction policy $u_t(k)=kx_t(k)$:
\[
\bigl(x_t(\mathbf{m}(k)),u_t(\mathbf{m}(k))\bigr)
=\bigl(x_t(k),u_t(k)\bigr),
\qquad t\in[T].
\]
\end{corollary}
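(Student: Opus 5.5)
We first check membership: with $\beta\triangleq k/\alpha\in[0,1]$, the coefficients $m_i(k)=\beta(1-\beta)^{i-1}$ are nonnegative. If $\beta=0$ all vanish. If $\beta>0$, the geometric sum gives $\sum_{i\ge1}m_i(k)=1-\lim_{n\to\infty}(1-\beta)^n\le1$; the convention $0^0=1$ handles $\beta=1$, where $\mathbf{m}(k)=(1,0,0,\ldots)$. Hence $\mathbf{m}(k)\in\mathcal{M}_\infty$.

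For the trajectory identity, both trajectories start at $x_1=0$ and follow \eqref{eq:model-dynamics} with the same arrivals. It therefore suffices to show that, along the fixed-fraction trajectory, the infinite-memory SDAC action \eqref{eq:def-infinite-sdac-policy} equals $kx_t(k)$ at every $t$. Induction on $t$ then shows that the states and actions coincide. To get this, unroll the fixed-fraction dynamics $x_{t+1}(k)=(\alpha-k)x_t(k)+w_t$ with $x_1(k)=0$:
\[
x_t(k)=\sum_{i=1}^{t-1}(\alpha-k)^{i-1}w_{t-i},
\]
which is also valid when $\alpha-k=0$ under the convention $0^0=1$. Multiplying by $k$ and writing $(\alpha-k)^{i-1}=\alpha^{i-1}(1-\beta)^{i-1}$ gives
\[
kx_t(k)=\sum_{i=1}^{t-1}\beta\alpha\,\alpha^{i-1}(1-\beta)^{i-1}w_{t-i}
=\sum_{i=1}^{t-1}m_i(k)\alpha^i w_{t-i}.
\]
This is exactly $u_t(\mathbf{m}(k))$ evaluated on the same past arrivals.

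The induction runs as follows. Suppose $x_s(\mathbf{m}(k))=x_s(k)$ for all $s\le t$. Then the arrivals $w_1,\ldots,w_{t-1}$ are common to both trajectories, so the display above gives $u_t(\mathbf{m}(k))=u_t(k)$. The dynamics then give $x_{t+1}(\mathbf{m}(k))=x_{t+1}(k)$. Since the SDAC action depends only on arrivals and not on states, the matching is in fact immediate once the unrolled formula is established. Feasibility of the fixed-fraction policy follows from $k\le\alpha$, and alternatively from the feasibility part of Lemma~\ref{lem:sdac-truncation-approx}.

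The only delicate step is the boundary case $k=\alpha$, where $0^0$ appears in both the unrolled state formula and in $m_1(\alpha)$. I would check it directly: $x_t(\alpha)=w_{t-1}$ and $u_t(\alpha)=\alpha w_{t-1}=m_1\alpha w_{t-1}$, with the convention $w_0=0$ at $t=1$. The case $k=0$ is also immediate, since both policies give zero action.
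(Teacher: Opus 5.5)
Your proof is correct and follows essentially the same route as the paper: you verify membership via the geometric series, unroll the fixed-fraction recursion to $x_t(k)=\sum_{i=1}^{t-1}(\alpha-k)^{i-1}w_{t-i}$, and use $\alpha^i m_i(k)=k(\alpha-k)^{i-1}$ to match the actions. You then conclude that the states coincide because both trajectories start at zero and use identical actions; the explicit induction is redundant, as you yourself note, but harmless.
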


\section{Online SDAC Controller}
\label{sec:online-algorithm}

The controller maintains a parameter $\mathbf{m}_t\in\mathcal{M}_H$ and updates
it by OMD \cite{hazan2016introduction} with step size $\eta>0$ and unnormalized
negentropy:
\begin{equation}
\label{eq:def-negentropy}
\Psi(\mathbf{m}) \triangleq \sum_{i=1}^H\bigl(m_i\log m_i-m_i\bigr), \qquad
\mathbf{m}\in\mathbb{R}_{\ge0}^H,
\end{equation}
where $0\log0\triangleq0$. For $\mathbf{m}\in\mathbb{R}_{\ge0}^H$ and
$\mathbf{v}\in\mathbb{R}_{>0}^H$, the associated Bregman divergence is
\begin{equation}
\label{eq:def-bregman}
D_\Psi(\mathbf{m}\|\mathbf{v})
\triangleq
\sum_{i=1}^H\Bigl(m_i\log\tfrac{m_i}{v_i}-m_i+v_i\Bigr).
\end{equation}
Initialize the online parameter as
\begin{equation}
\label{eq:alg-init}
\mathbf{m}_1\triangleq\frac{1}{H+1}\mathbf{1}.
\end{equation}

\paragraph{Action selection}
At time $t$, the controller uses $\mathbf{m}_t$, initialized by
\eqref{eq:alg-init} for $t=1$ and produced by the OMD update at time $t-1$
for $t\ge2$. Fixed-policy feasibility need not hold when $\mathbf{m}_t$ varies: the
nominal action $\langle\boldsymbol{\phi}_t^u,\mathbf{m}_t\rangle$ can exceed
$\alpha x_t$. The controller therefore selects
\begin{equation}
\label{eq:alg-feasible-action}
u_t
\triangleq
\min\!\left\{
\langle\boldsymbol{\phi}_t^u,\mathbf{m}_t\rangle,
\alpha x_t
\right\}.
\end{equation}
The state then evolves according to \eqref{eq:model-dynamics}.
Since $\boldsymbol{\phi}_t^u$ and $\mathbf{m}_t$ are componentwise nonnegative,
$\langle\boldsymbol{\phi}_t^u,\mathbf{m}_t\rangle\ge0$. If $x_t\ge0$, then
\eqref{eq:alg-feasible-action} gives $0\le u_t\le\alpha x_t$, and
\eqref{eq:model-dynamics} gives $x_{t+1}\ge w_t\ge0$. Since $x_1=0$, induction
shows that the online trajectory is feasible.

Lemma~\ref{lem:availability-projection} shows that the $D_\Psi$-projection of
$\mathbf{m}_t$ onto the parameters compatible with $x_t$ induces exactly the
clipped action $u_t$.
\paragraph{Surrogate cost and OMD update}
After $w_t$ and $c_t$ are revealed, the controller evaluates the subgradient
$\mathbf{g}_t$ in \eqref{eq:surr-subgradient} at $\mathbf{m}_t$ and sets
\begin{equation}
\label{eq:alg-omd-argmin}
\mathbf{m}_{t+1}
\triangleq
\operatorname*{arg\,min}_{\mathbf{m}\in\mathcal{M}_H}
\left\{
\eta\langle \mathbf{g}_t,\mathbf{m}\rangle
+D_\Psi(\mathbf{m}\|\mathbf{m}_t)
\right\}.
\end{equation}
Lemma~\ref{lem:omd-closed-form} gives a two-step implementation. First, take the
coordinatewise multiplicative step
\begin{equation}
\label{eq:alg-omd-mult}
\widetilde{m}_{t+1,i}
\triangleq
m_{t,i}\exp\!\bigl(-\eta g_{t,i}\bigr),
\qquad
i\in[H].
\end{equation}
Next, project onto $\mathcal{M}_H$ by rescaling if needed:
\begin{equation}
\label{eq:alg-omd-proj}
\begin{aligned}
\mathbf{m}_{t+1}
&=\operatorname*{arg\,min}_{\mathbf{m}\in\mathcal{M}_H}
D_\Psi(\mathbf{m}\|\widetilde{\mathbf{m}}_{t+1}) \\
&=\frac{\widetilde{\mathbf{m}}_{t+1}}
{\max\{1,\|\widetilde{\mathbf{m}}_{t+1}\|_1\}}.
\end{aligned}
\end{equation}

\section{Regret Upper Bounds}
\label{sec:regret-analysis}
The next theorem bounds the regret of the online SDAC controller against the
best fixed SDAC policy. Proofs appear in
Appendix~\ref{sec:appendix-regret-proofs}.

\begin{theorem}[Regret against fixed SDAC policies]
\label{thm:regret-bound}
Suppose the costs satisfy the assumptions of Section~\ref{sec:model-setup}.
For any step size $\eta>0$, the controller in
Section~\ref{sec:online-algorithm}, initialized by \eqref{eq:alg-init}, satisfies
\begin{equation}
\label{eq:regret-theorem-bound}
\RegSDAC{T}(H)
\le
\frac{\log(H+1)}{\eta}+\eta T\Gamma_\alpha,
\end{equation}
where
\begin{equation}
\label{eq:def-Gamma-alpha}
\Gamma_\alpha
\triangleq
\frac{G_\alpha^2}{2}
+
\frac{\alpha G_\alpha(L_x+\alpha L_u)}{(1-\alpha)^2},
\end{equation}
and $G_\alpha=\alpha\bigl(\frac{L_x}{1-\alpha}+L_u\bigr)$ as in
\eqref{eq:def-surrogate-G}.

Choosing $\eta^*=\sqrt{\log(H+1)/(T\Gamma_\alpha)}$ gives
\[
\RegSDAC{T}(H)\le 2\sqrt{\Gamma_\alpha T\log(H+1)}.
\]
\end{theorem}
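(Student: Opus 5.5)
The proof splits the regret into two parts. The first is an online-convex-optimization term on the surrogate losses $\ell_t$ of Lemma~\ref{lem:sdac-surrogate}. The second is a ``trajectory mismatch'' term between the actual online trajectory $(x_t,u_t)$ and the fixed-policy trajectory $(x_t(\mathbf{m}_t),u_t(\mathbf{m}_t))$ evaluated at the current parameter. For any comparator $\mathbf{m}^\star\in\mathcal{M}_H$ we write
\[
J_T^{\mathrm{on}}-J_T(\mathbf{m}^\star)
=\sum_{t=1}^T\bigl[c_t(x_t,u_t)-\ell_t(\mathbf{m}_t)\bigr]
+\sum_{t=1}^T\bigl[\ell_t(\mathbf{m}_t)-\ell_t(\mathbf{m}^\star)\bigr].
\]

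\textbf{Step 1: the OMD term.} Since $\ell_t$ is convex and $\mathbf{g}_t\in\partial\ell_t(\mathbf{m}_t)$ with $\|\mathbf{g}_t\|_\infty\le G_\alpha$ (Lemma~\ref{lem:sdac-surrogate}), the standard entropic OMD analysis on $\mathcal{M}_H$ applies. One-step analysis of \eqref{eq:alg-omd-argmin} together with the generalized Pythagorean inequality for the Bregman projection \eqref{eq:alg-omd-proj} gives
\[
\eta\langle\mathbf{g}_t,\mathbf{m}_t-\mathbf{m}^\star\rangle\le D_\Psi(\mathbf{m}^\star\|\mathbf{m}_t)-D_\Psi(\mathbf{m}^\star\|\mathbf{m}_{t+1})+\tfrac{\eta^2}{2}G_\alpha^2\|\mathbf{m}_t\|_1 .
\]
The last term comes from $e^{-x}\le1-x+x^2/2$. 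That expansion is valid only for $x\ge0$, so I will shift the gradients first: replace $g_{t,i}$ by $g_{t,i}+G_\alpha\ge0$. Alternatively, I will use the local-norm bound in a form valid for either sign, bounding by $\eta^2 G_\alpha^2/2$ after the constant is handled. Either route should yield the constant $G_\alpha^2/2$ in $\Gamma_\alpha$.

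Telescoping gives the first sum at most $D_\Psi(\mathbf{m}^\star\|\mathbf{m}_1)/\eta+\eta TG_\alpha^2/2$. For the initial divergence, with $\mathbf{m}_1=\mathbf{1}/(H+1)$, write $s=\|\mathbf{m}^\star\|_1\le1$. Then
\[
D_\Psi(\mathbf{m}^\star\|\mathbf{m}_1)=\sum_i m^\star_i\log m^\star_i+s\log(H+1)-s+\tfrac{H}{H+1}.
\]
Treating the $(H+1)$-st coordinate as a slack with mass $1-s$ should show that this is at most $\log(H+1)$.

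\textbf{Step 2: the mismatch term.} By \eqref{eq:model-lipschitz},
\[
c_t(x_t,u_t)-\ell_t(\mathbf{m}_t)\le L_x|x_t-x_t(\mathbf{m}_t)|+L_u|u_t-u_t(\mathbf{m}_t)|.
\]
First I will show the clipped action satisfies $|u_t-u_t(\mathbf{m}_t)|\le L$-controlled quantities. To do so, I will unroll the online state as $x_t=\sum_{j\ge1}\alpha^{j-1}(w_{t-j}-u_{t-j})$ and compare it with the analogous expression for the fixed trajectory under $\mathbf{m}_t$. The fixed-policy actions $u_s(\mathbf{m}_t)$ are linear in $\mathbf{m}_t$ with coefficients $\alpha^i w_{s-i}\le\alpha^i$. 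Hence $|u_s(\mathbf{m}_t)-u_s(\mathbf{m}_s)|\le\alpha\|\mathbf{m}_t-\mathbf{m}_s\|_1$.

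Clipping only acts when the nominal action exceeds $\alpha x_t$, so it can only reduce the state discrepancy. The goal is therefore to bound
\[
|x_t-x_t(\mathbf{m}_t)|\lesssim\sum_{j\ge1}\alpha^{j-1}\cdot\alpha\|\mathbf{m}_t-\mathbf{m}_{t-j}\|_1 .
\]
For the OMD iterates, $\|\mathbf{m}_{s+1}-\mathbf{m}_s\|_1\le\eta G_\alpha$ holds up to constants, from the multiplicative step and the rescaling. This gives $\|\mathbf{m}_t-\mathbf{m}_{t-j}\|_1\le j\eta G_\alpha$, and summing $\sum_j j\alpha^{j}=\alpha/(1-\alpha)^2$ produces exactly the factor $\alpha G_\alpha(L_x+\alpha L_u)/(1-\alpha)^2$ in $\Gamma_\alpha$. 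The $L_u$ term gains the extra $\alpha$ because the action discrepancy is $\alpha$ times a state-like discrepancy.

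\textbf{Main obstacle.} The hardest part is Step 2. Clipping makes the online action a nonlinear function of history, so the unrolled bound needs care. I would handle it by an induction on $t$ of the form: $u_t$ lies between $\min\{u_t(\mathbf{m}_t),\alpha x_t\}$ and $u_t(\mathbf{m}_t)$. When clipping is active, the deficit $u_t(\mathbf{m}_t)-\alpha x_t$ is bounded by $\alpha(x_t(\mathbf{m}_t)-x_t)$, using the feasibility of the fixed policy (Lemma~\ref{lem:sdac-feasibility}). Lemma~\ref{lem:availability-projection} may streamline this step. The per-step stability bound $\|\mathbf{m}_{t+1}-\mathbf{m}_t\|_1\le\eta G_\alpha$ needs checking too: $|1-e^{-\eta g}|\le\eta|g|e^{\eta G_\alpha}$, and the normalization can only shrink differences in the right direction, though I expect constants to need care here. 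Combining both terms and optimizing over $\eta$ gives the stated bound $2\sqrt{\Gamma_\alpha T\log(H+1)}$.
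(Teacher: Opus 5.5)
Your decomposition, the deficit bound $0\le u_t(\mathbf{m}_t)-u_t\le\alpha|x_t-x_t(\mathbf{m}_t)|$ from Lemma~\ref{lem:sdac-feasibility}, and your frozen-comparator estimate $|x_t-x_t(\mathbf{m}_t)|\le\sum_{j\ge1}\alpha^j\|\mathbf{m}_t-\mathbf{m}_{t-j}\|_1$ are all sound. Prove the last one by the contraction $|x_{s+1}-x_{s+1}(\mathbf{m}_t)|\le\alpha|x_s-x_s(\mathbf{m}_t)|+\alpha\|\mathbf{m}_s-\mathbf{m}_t\|_1$ from Lemma~\ref{lem:positive-part-dynamics}, not by linear unrolling, where clipping deficits enter with a positive sign. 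With $\sum_j j\alpha^j=\alpha/(1-\alpha)^2$ this reproduces the paper's bound; the paper instead contracts against $x_{t+1}(\mathbf{m}_t)$ and pays one step of drift through $\|\boldsymbol{\phi}^x_{t+1}\|_\infty\le\alpha/(1-\alpha)$.

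The genuine gap is that neither of your OMD arguments gives the stated constants for every $\eta>0$. The gradient shift is not available on $\mathcal{M}_H$. It changes the linear regret by $G_\alpha(\|\mathbf{m}_t\|_1-\|\mathbf{m}^\star\|_1)$, which is nonzero in general, and it changes the iterates, because the normalization is $\max\{1,\|\widetilde{\mathbf{m}}_{t+1}\|_1\}$. Without the shift, your displayed one-step inequality is false. Take $H=1$, $m_t=\epsilon$, $g_t=-G_\alpha$, $m^\star=1$, and $\epsilon e^{x}\le1$ with $x=\eta G_\alpha$; then the left side exceeds the right side by $\epsilon(e^{x}-1-x-x^2/2)>0$. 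Negative coordinates genuinely occur, since $\phi^x_{t,i}\le0$ and $\zeta^x_t,\zeta^u_t$ can have either sign. The sign-free local-norm bound needs $\eta G_\alpha\le1$ and yields $\eta^2G_\alpha^2$, not $\eta^2G_\alpha^2/2$.

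The same issue breaks your movement bound. The estimate $|1-e^{-\eta g}|\le\eta|g|e^{\eta G_\alpha}$ leaves a factor $e^{\eta G_\alpha}$. The rescaling is also not $\ell_1$-nonexpansive: with $\mathbf{m}_t=(1-\epsilon,\epsilon)$ and $\widetilde{\mathbf{m}}_{t+1}=(1-\epsilon,K\epsilon)$, it nearly doubles the distance as $\epsilon\to0$. So you do not obtain $\|\mathbf{m}_{t+1}-\mathbf{m}_t\|_1\le\eta G_\alpha$, and that bound is exactly what produces the second term of $\Gamma_\alpha$.

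The missing ingredient is the Pinsker-type inequality $D_\Psi(\mathbf{m}\|\mathbf{v})\ge\frac12\|\mathbf{m}-\mathbf{v}\|_1^2$ on $\mathcal{M}_H$, proved by weighted Cauchy--Schwarz along the segment using $\sum_i z_i\le1$. With it, the first-order optimality condition of \eqref{eq:alg-omd-argmin}, the three-point identity, H\"older, and Young bound $\langle\mathbf{g}_t,\mathbf{m}_t-\mathbf{m}_{t+1}\rangle-D_\Psi(\mathbf{m}_{t+1}\|\mathbf{m}_t)/\eta$ by $\frac{\eta}{2}\|\mathbf{g}_t\|_\infty^2$, for every $\eta>0$ and either sign. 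Applying the inequality in both directions, together with optimality at the comparison point $\mathbf{m}_t$, gives $\|\mathbf{m}_{t+1}-\mathbf{m}_t\|_1^2\le D_\Psi(\mathbf{m}_{t+1}\|\mathbf{m}_t)+D_\Psi(\mathbf{m}_t\|\mathbf{m}_{t+1})\le\eta\langle\mathbf{g}_t,\mathbf{m}_t-\mathbf{m}_{t+1}\rangle$. Hence $\|\mathbf{m}_{t+1}-\mathbf{m}_t\|_1\le\eta\|\mathbf{g}_t\|_\infty$ with constant exactly one.

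Your slack-coordinate argument for $D_\Psi(\mathbf{m}^\star\|\mathbf{m}_1)\le\log(H+1)$ does work. It is quicker, though, to drop $\sum_i m^\star_i\log m^\star_i\le0$ and note that the remainder is affine in $s$ and at most $\log(H+1)$ at $s=0$ and $s=1$.
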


The proof combines the OMD bound for the surrogate costs
(Lemma~\ref{lem:omd-surrogate-regret}) with the following bound on the
difference between realized and surrogate costs.
\begin{lemma}[Stability under online parameter updates]
\label{lem:contractive-deviation}
Under the assumptions of Theorem~\ref{thm:regret-bound}, the online SDAC
controller satisfies
\begin{equation}
\label{eq:stability-sum-bound}
\sum_{t=1}^T \Bigl(c_t(x_t,u_t)-\ell_t(\mathbf{m}_t)\Bigr)
\le
\frac{\alpha\eta T G_\alpha(L_x+\alpha L_u)}{(1-\alpha)^2}.
\end{equation}
\end{lemma}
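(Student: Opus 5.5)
The plan is to bound the per-step gap $c_t(x_t,u_t)-\ell_t(\mathbf{m}_t)$ by comparing the online trajectory with the fixed-policy trajectory of the \emph{current} parameter $\mathbf{m}_t$, and then to sum over $t$. Since $\ell_t(\mathbf{m}_t)=c_t(x_t(\mathbf{m}_t),u_t(\mathbf{m}_t))$ and both pairs lie in $\mathcal{D}_\alpha$ (the online trajectory is feasible as shown in Section~\ref{sec:online-algorithm}; the fixed one by Lemma~\ref{lem:sdac-feasibility}), \eqref{eq:model-lipschitz} gives
\[
c_t(x_t,u_t)-\ell_t(\mathbf{m}_t)\le L_x e_t+L_u|u_t-u_t(\mathbf{m}_t)|,
\]
where $e_t\triangleq|x_t-x_t(\mathbf{m}_t)|$. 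It therefore suffices to show
\[
e_t\le D\triangleq\frac{\alpha\eta G_\alpha}{(1-\alpha)^2}
\quad\text{and}\quad
|u_t-u_t(\mathbf{m}_t)|\le\alpha D .
\]
Summing $L_xD+L_u\alpha D$ over $t\in[T]$ then gives exactly \eqref{eq:stability-sum-bound}.

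\textbf{Step 1 (parameter drift).} First I will show $\|\mathbf{m}_{s+1}-\mathbf{m}_s\|_1\le\eta\|\mathbf{g}_s\|_\infty\le\eta G_\alpha$. The natural route is the standard one-step OMD stability argument. Use the first-order optimality conditions of \eqref{eq:alg-omd-argmin} at $\mathbf{m}_{s+1}$, together with the three-point identity for $D_\Psi$. Then use the fact that $\Psi$ is $1$-strongly convex with respect to $\|\cdot\|_1$ on $\mathcal{M}_H$, which is Pinsker's inequality extended to subprobability vectors with total mass at most one. Combining the two should give $\|\mathbf{m}_{s+1}-\mathbf{m}_s\|_1^2\le\eta\langle\mathbf{g}_s,\mathbf{m}_s-\mathbf{m}_{s+1}\rangle\le\eta\|\mathbf{g}_s\|_\infty\|\mathbf{m}_{s+1}-\mathbf{m}_s\|_1$. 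The bound $\|\mathbf{g}_s\|_\infty\le G_\alpha$ comes from Lemma~\ref{lem:sdac-surrogate}. By the triangle inequality, $\|\mathbf{m}_s-\mathbf{m}_t\|_1\le\eta G_\alpha(t-s)$ for $s<t$. I expect this to be the main technical obstacle: the strong-convexity constant must be exactly $1$ on the subsimplex, and the rescaling projection \eqref{eq:alg-omd-proj} must not spoil it. If the variational argument is awkward, my fallback is to verify it directly from \eqref{eq:alg-omd-mult}--\eqref{eq:alg-omd-proj}.

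\textbf{Step 2 (state error recursion).} Fix $t$ and set $a_s\triangleq\langle\boldsymbol{\phi}_s^u,\mathbf{m}_s\rangle$ and $a_s'\triangleq\langle\boldsymbol{\phi}_s^u,\mathbf{m}_t\rangle=u_s(\mathbf{m}_t)$. By Lemma~\ref{lem:sdac-feasibility} we have $a_s'\le\alpha x_s(\mathbf{m}_t)$, so both trajectories can be written in the common clipped form
\[
x_{s+1}=\max\{\alpha x_s-a_s,0\}+w_s,
\qquad
x_{s+1}(\mathbf{m}_t)=\max\{\alpha x_s(\mathbf{m}_t)-a_s',0\}+w_s .
\]
The map $(x,a)\mapsto\max\{\alpha x-a,0\}$ is $\alpha$-Lipschitz in $x$ and $1$-Lipschitz in $a$. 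Hence $|x_{s+1}-x_{s+1}(\mathbf{m}_t)|\le\alpha|x_s-x_s(\mathbf{m}_t)|+|a_s-a_s'|$. Writing the clipping as $\max\{\cdot,0\}$ is what avoids the factor-$2$ blowup a naive bound on $|u_s-u_s(\mathbf{m}_t)|$ would cause. Since $\|\boldsymbol{\phi}_s^u\|_\infty\le\alpha$, Step~1 gives $|a_s-a_s'|\le\alpha\eta G_\alpha(t-s)$. Unrolling from $x_1=x_1(\mathbf{m}_t)=0$,
\[
e_t\le\sum_{s=1}^{t-1}\alpha^{t-1-s}\,\alpha\eta G_\alpha(t-s)
\le\alpha\eta G_\alpha\sum_{k\ge1}k\alpha^{k-1}
=\frac{\alpha\eta G_\alpha}{(1-\alpha)^2}=D .
\]

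\textbf{Step 3 (action error).} At time $t$ the nominal online action uses the same parameter, so $u_t=\min\{a_t',\alpha x_t\}$ while $u_t(\mathbf{m}_t)=a_t'=\min\{a_t',\alpha x_t(\mathbf{m}_t)\}$. Because $\min$ is $1$-Lipschitz in each argument, $|u_t-u_t(\mathbf{m}_t)|\le\alpha e_t\le\alpha D$. Combining Steps~2 and~3 with the Lipschitz bound above and summing over $t\in[T]$ completes the proof.
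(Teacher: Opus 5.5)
Your proposal is correct and gives exactly the constant in \eqref{eq:stability-sum-bound}, but it bounds the state error by a different route from the paper.

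\textbf{What matches the paper.} The skeleton is the paper's: compare with the fixed-policy trajectory of $\mathbf{m}_t$, write both transitions in positive-part form (Lemma~\ref{lem:positive-part-dynamics}), bound the action error by $\alpha|x_t-x_t(\mathbf{m}_t)|$, and apply \eqref{eq:model-lipschitz}. The one-step movement bound is Lemma~\ref{lem:omd-move}, proved by the optimality-plus-two-sided-strong-convexity argument you sketch. Your worry about the projection does not materialize: the optimality condition is written for the constrained argmin \eqref{eq:alg-omd-argmin}, and the iterates stay strictly positive.

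\textbf{Where you differ.} The paper uses a moving comparator. It splits $d_{t+1}\le|x_{t+1}-x_{t+1}(\mathbf{m}_t)|+|x_{t+1}(\mathbf{m}_t)-x_{t+1}(\mathbf{m}_{t+1})|$. The first term contracts to $\alpha d_t$. The second is at most $\|\boldsymbol{\phi}_{t+1}^x\|_\infty\eta G_\alpha\le\alpha\eta G_\alpha/(1-\alpha)$, by the affine state formula of Lemma~\ref{lem:sdac-state-decomp}. A single scalar recursion then finishes, and only the one-step drift enters.

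You instead freeze the comparator $\mathbf{m}_t$ for each $t$. You push the accumulated input mismatch $\alpha\eta G_\alpha(t-s)$ through the contracting dynamics and recover $(1-\alpha)^{-2}$ from $\sum_k k\alpha^{k-1}$.

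\textbf{Trade-offs.} The paper's route is shorter and reuses the sensitivity bound on $\boldsymbol{\phi}_t^x$ already established for Lemma~\ref{lem:sdac-surrogate}. Yours is the standard DAC memory-mismatch argument. It needs only $\|\boldsymbol{\phi}_s^u\|_\infty\le\alpha$ and contraction of the clipped dynamics, not the closed-form decomposition; in effect it re-derives the constant $\alpha/(1-\alpha)$ by unrolling. It is also marginally sharper for finite $t$, though the uniform bound is identical.
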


\begin{corollary}[No-regret against infinite-memory SDAC policies]
\label{cor:regret-vs-infinite-sdac}
Under the assumptions of Theorem~\ref{thm:regret-bound}, run the online SDAC
controller with step size $\eta^*$ and memory horizon
\[
H=H_T
\triangleq
\max\!\left\{1,\left\lceil\frac{\log T}{\log(1/\alpha)}\right\rceil\right\},
\]
where $\eta^*$ is given in that theorem. Define the regret against the best fixed
infinite-memory SDAC policy by
\[
\RegSDAC{T}(\infty)
\triangleq
J_T^{\mathrm{on}}-\inf_{\mathbf{m}\in\mathcal{M}_\infty}J_T(\mathbf{m}).
\]
Then
\begin{equation}
\label{eq:infinite-sdac-regret-bound}
\begin{aligned}
\RegSDAC{T}(\infty)
&\le
2\sqrt{\Gamma_\alpha T\log(H_T+1)}\\
&\quad+
\left(L_u+\frac{L_x}{1-\alpha}\right)\alpha^{H_T+1}T.
\end{aligned}
\end{equation}
For fixed $\alpha,L_x,L_u$, the right-hand side is
$O\!\left(\sqrt{T\log\log T}\right)=o(T)$.
The same bound holds against the best fixed-fraction policy, since
Corollary~\ref{cor:fraction-as-sdac} gives
\[
J_T^{\mathrm{on}}
-\min_{k\in[0,\alpha]}\sum_{t=1}^T c_t\bigl(x_t(k),u_t(k)\bigr)
\le\RegSDAC{T}(\infty).
\]
\end{corollary}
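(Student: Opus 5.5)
The plan is to split the infinite-memory regret into three pieces and reuse the earlier results. Fix any $\mathbf{m}\in\mathcal{M}_\infty$ and let $\mathbf{m}_H(\mathbf{m})\in\mathcal{M}_{H_T}$ be its truncation from Lemma~\ref{lem:sdac-truncation-approx}. Write
\[
J_T^{\mathrm{on}}-J_T(\mathbf{m})
=\Bigl(J_T^{\mathrm{on}}-J_T\bigl(\mathbf{m}_H(\mathbf{m})\bigr)\Bigr)
+\Bigl(J_T\bigl(\mathbf{m}_H(\mathbf{m})\bigr)-J_T(\mathbf{m})\Bigr).
\]
The first bracket is at most $\RegSDAC{T}(H_T)$, because $\mathbf{m}_H(\mathbf{m})\in\mathcal{M}_{H_T}$. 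With $\eta^*$, Theorem~\ref{thm:regret-bound} bounds it by $2\sqrt{\Gamma_\alpha T\log(H_T+1)}$. The second bracket is bounded by \eqref{eq:sdac-truncation-cost-gap} with $H=H_T$. Taking the supremum over $\mathbf{m}$ turns $-J_T(\mathbf{m})$ into $-\inf_{\mathbf{m}}J_T(\mathbf{m})$, which gives \eqref{eq:infinite-sdac-regret-bound}.

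For the asymptotics, use $H_T\ge\log T/\log(1/\alpha)$, which gives $\alpha^{H_T}\le 1/T$. The truncation term is then at most $(L_u+L_x/(1-\alpha))\alpha$, a constant. Since $H_T=O(\log T)$ for fixed $\alpha$, we have $\log(H_T+1)=O(\log\log T)$, so the first term is $O(\sqrt{T\log\log T})$, which is $o(T)$. The only care needed is small $T$, where $\log T/\log(1/\alpha)$ may be at most zero. The $\max\{1,\cdot\}$ in the definition of $H_T$ keeps $H_T\ge 1$, and the bound $\alpha^{H_T}\le 1/T$ still holds there because $\alpha^{H_T}\le1\le 1/T$ is not needed; for $T=1$ the term is trivially bounded.

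For the fixed-fraction claim, fix $k\in[0,\alpha]$. By Corollary~\ref{cor:fraction-as-sdac}, $\mathbf{m}(k)\in\mathcal{M}_\infty$ and its trajectory coincides with $(x_t(k),u_t(k))$. Hence $\sum_t c_t(x_t(k),u_t(k))=J_T(\mathbf{m}(k))\ge\inf_{\mathcal{M}_\infty}J_T$. Taking the minimum over $k$ gives the displayed inequality. The minimum exists because the map $k\mapsto\sum_t c_t(x_t(k),u_t(k))$ is continuous: the states are polynomial in $k$ and $c_t$ is Lipschitz.

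The argument has no real obstacle. The one subtle point is that the truncated parameter is benchmarked on its own fixed-policy trajectory, not the infinite-memory one. This is exactly what Lemma~\ref{lem:sdac-truncation-approx} controls, so no additional stability argument is required.
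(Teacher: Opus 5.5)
Your proposal is correct and follows the paper's proof essentially step for step. The steps are the same: truncate to $\mathcal{M}_{H_T}$, apply Theorem~\ref{thm:regret-bound} at horizon $H_T$ together with the truncation bound, use $\alpha^{H_T+1}T\le\alpha$ for the asymptotics, and use $\mathbf{m}(k)\in\mathcal{M}_\infty$ for the fixed-fraction claim.

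Your small-$T$ aside is garbled but not needed. Since $\log T\ge0$ for every $T\ge1$, you have $H_T\ge\log T/\log(1/\alpha)$, and hence $\alpha^{H_T}\le1/T$ holds unconditionally.
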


\section{Minimax Regret Lower Bound}
\label{sec:minimax-lower-bound}

The next result shows that the $\sqrt{T}$ dependence is unavoidable and
quantifies the necessary dependence on $1-\alpha$ for any
causal feasible controller. Omitted proofs appear in
Appendix~\ref{sec:appendix-minimax-proofs}.

\begin{theorem}[Minimax SDAC regret lower bound]
\label{thm:sdac-lower-bound}
Fix $H\ge1$ and $T\ge2$. For every causal feasible online controller
$\mathcal{A}$, possibly randomized, there is a deterministic oblivious sequence
of arrivals and costs satisfying the assumptions of
Section~\ref{sec:model-setup} such that
\begin{equation}
\label{eq:sdac-lower-exact}
\mathbb{E}_{\mathcal{A}}\!\left[\RegSDAC{T}(H)\right]
\ge
\frac{1}{2\sqrt{2}}
\sqrt{\sum_{t=2}^T \beta_t^2},
\end{equation}
where
\begin{equation}
\label{eq:sdac-lower-beta}
\beta_t
\triangleq
\alpha\left(
L_u+\frac{L_x(1-\alpha^{t-2})}{1-\alpha}
\right),
\qquad t=2,\ldots,T.
\end{equation}
The expectation is over the internal randomization of $\mathcal{A}$.
\end{theorem}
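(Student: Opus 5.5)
The plan is a randomized-adversary argument, Yao-style: draw the cost signs i.i.d.\ Rademacher, show that every causal controller has zero expected cost while the better of two fixed SDAC policies gains $\tfrac12\mathbb{E}|\sum_t\sigma_t\beta_t|$, and then derandomize. The arrivals will be deterministic, $w_t=1$ for all $t$. The costs will be linear with random signs:
\[
c_t(x,u)=\sigma_t\,(L_u u-L_x x),
\]
with $\sigma_1,\ldots,\sigma_T$ i.i.d.\ Rademacher and independent of the controller's internal randomness. Each $c_t$ is linear, hence convex. It satisfies \eqref{eq:model-lipschitz}, and the subgradient $(-\sigma_tL_x,\sigma_tL_u)$ meets the bounds in \eqref{eq:model-cost-subgradient}. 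So every realization is an admissible oblivious sequence.

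\textbf{Step 1 (two comparator policies).} Since $H\ge1$, both $\mathbf{m}=\mathbf{0}$ and $\mathbf{m}=\mathbf{e}_1$ lie in $\mathcal{M}_H$. Under $\mathbf{0}$ the action is $u_t=0$ and the state is $x_t=r_t=(1-\alpha^{t-1})/(1-\alpha)$. Under $\mathbf{e}_1$, for $t\ge2$, Lemma~\ref{lem:sdac-state-decomp} gives $u_t=\alpha$ and $x_t=r_t(1)=1$; at $t=1$ both policies give $(0,0)$. Hence, for $t\ge2$, the action difference is $\alpha$ and the state difference is $r_t-1=\alpha(1-\alpha^{t-2})/(1-\alpha)$. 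This gives
\[
c_t(x_t(\mathbf{e}_1),u_t(\mathbf{e}_1))-c_t(x_t(\mathbf{0}),u_t(\mathbf{0}))=\sigma_t\beta_t,
\]
with $\beta_t$ exactly as in \eqref{eq:sdac-lower-beta}. The sign convention $L_u u-L_x x$ was chosen so that the two contributions add. Write $A=J_T(\mathbf{0})$ and $B=J_T(\mathbf{e}_1)$, so that $B-A=\sum_{t\ge2}\sigma_t\beta_t$.

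\textbf{Step 2 (the online controller gains nothing).} By causality, $x_t$ and $u_t$ are functions of $w_{1:t-1}$, $c_{1:t-1}$ and the internal randomness. They are therefore independent of $\sigma_t$, and they are bounded. It follows that $\mathbb{E}[\sigma_t(L_uu_t-L_xx_t)]=0$ for every $t$, so $\mathbb{E}[J_T^{\mathrm{on}}]=0$. Also $\mathbb{E}[A]=0$, because $A$ is a deterministic linear combination of the $\sigma_t$; the same holds for $B$.

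\textbf{Step 3 (comparator gain).} Since $\min_{\mathbf{m}\in\mathcal{M}_H}J_T(\mathbf{m})\le\min\{A,B\}=\tfrac{A+B}{2}-\tfrac{|A-B|}{2}$, Step 2 gives
\[
\mathbb{E}_{\sigma,\mathcal{A}}[\RegSDAC{T}(H)]\ge\tfrac12\,\mathbb{E}\Bigl|\sum_{t=2}^T\sigma_t\beta_t\Bigr|.
\]
The remaining step, and the only real obstacle, is the sharp Khintchine inequality $\mathbb{E}|\sum_t\sigma_t\beta_t|\ge\tfrac{1}{\sqrt2}\sqrt{\sum_t\beta_t^2}$ (Szarek's constant). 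Combined with the factor $\tfrac12$, it yields exactly $\tfrac{1}{2\sqrt2}$. If a self-contained argument is preferred, I would cite Szarek or Haagerup for the constant, or derive a slightly weaker constant from the Paley--Zygmund inequality together with $\mathbb{E}S^4\le3(\mathbb{E}S^2)^2$ for $S=\sum_t\sigma_t\beta_t$; only the cited constant matches the statement.

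\textbf{Step 4 (derandomization).} The bound in Step 3 is an average over $\sigma$ of $\mathbb{E}_{\mathcal{A}}[\RegSDAC{T}(H)\mid\sigma]$. Hence some deterministic sign vector $\sigma$ attains at least that average. That vector fixes an oblivious deterministic sequence of arrivals and costs satisfying \eqref{eq:sdac-lower-exact}, which completes the argument.
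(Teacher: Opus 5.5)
Your proposal is correct and follows essentially the same route as the paper: constant arrivals $w_t=1$, Rademacher-signed linear costs, the comparators $\mathbf{0}$ and $\mathbf{e}_1$, zero expected online cost by causality, the identity $\min\{a,b\}=(a+b-|a-b|)/2$, the sharp Khintchine constant $1/\sqrt{2}$, and averaging over sign vectors to derandomize. The only difference is your sign convention $\sigma_t(L_u u-L_x x)$ in place of the paper's $\xi_t(L_x x-L_u u)$, which is immaterial because the Rademacher signs are symmetric.
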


\begin{corollary}[Lower bound beyond the mixing time]
\label{cor:sdac-lower-mixing}
Fix $H\ge1$ and define
\[
t_{\mathrm{mix}}(\alpha)
\triangleq
\left\lceil
\frac{\log 2}{\log(1/\alpha)}
\right\rceil.
\]
If $T\ge2t_{\mathrm{mix}}(\alpha)+2$, then for every causal feasible controller
$\mathcal{A}$, possibly randomized, there is a deterministic oblivious sequence
satisfying the model assumptions such that
\[
\mathbb{E}_{\mathcal{A}}\!\left[\RegSDAC{T}(H)\right]
\ge
\frac{G_\alpha}{8}\sqrt{T},
\]
where $G_\alpha=\alpha(L_x/(1-\alpha)+L_u)$ as in
\eqref{eq:def-surrogate-G}.
\end{corollary}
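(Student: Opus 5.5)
The plan is to apply Theorem~\ref{thm:sdac-lower-bound} directly and lower-bound the sum $\sum_{t=2}^T\beta_t^2$ by discarding the early, pre-mixing terms. Fix a causal feasible controller $\mathcal{A}$. The theorem supplies a deterministic oblivious sequence with $\mathbb{E}_{\mathcal{A}}[\RegSDAC{T}(H)]\ge\frac{1}{2\sqrt2}\sqrt{S}$, where $S\triangleq\sum_{t=2}^T\beta_t^2$. The target $\frac{G_\alpha}{8}\sqrt T$ follows once I show
\[
S\ge\frac{G_\alpha^2T}{8},
\]
because then $\frac{1}{2\sqrt2}\sqrt S\ge\frac{1}{2\sqrt2}\cdot\frac{G_\alpha\sqrt T}{2\sqrt2}=\frac{G_\alpha}{8}\sqrt T$.

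\emph{Step 1 (per-term bound after mixing).} Write $t_{\mathrm{mix}}=t_{\mathrm{mix}}(\alpha)$. Its definition gives $t_{\mathrm{mix}}\log(1/\alpha)\ge\log2$, that is, $\alpha^{t_{\mathrm{mix}}}\le\frac12$. Since $\alpha\in(0,1)$, every $t\ge t_{\mathrm{mix}}+2$ satisfies $\alpha^{t-2}\le\alpha^{t_{\mathrm{mix}}}\le\frac12$, so $1-\alpha^{t-2}\ge\frac12$. Plugging this into \eqref{eq:sdac-lower-beta} gives
\[
\beta_t\ge\alpha\left(L_u+\frac{L_x}{2(1-\alpha)}\right)\ge\frac{\alpha}{2}\left(L_u+\frac{L_x}{1-\alpha}\right)=\frac{G_\alpha}{2}.
\]
Hence $\beta_t^2\ge G_\alpha^2/4$ for these $t$.

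\emph{Step 2 (counting terms).} All $\beta_t$ are nonnegative, so I drop the terms with $t\le t_{\mathrm{mix}}+1$. The indices $t\in\{t_{\mathrm{mix}}+2,\ldots,T\}$ number $T-t_{\mathrm{mix}}-1$. The hypothesis $T\ge2t_{\mathrm{mix}}+2$ is equivalent to $T-t_{\mathrm{mix}}-1\ge T/2$. Combining with Step~1,
\[
S\ge\frac{T}{2}\cdot\frac{G_\alpha^2}{4}=\frac{G_\alpha^2T}{8},
\]
which completes the argument.

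There is no real obstacle here; the corollary is essentially bookkeeping. The only points needing care are the direction of the inequality $\alpha^{t_{\mathrm{mix}}}\le\frac12$, which comes from the ceiling in the definition of $t_{\mathrm{mix}}$, and checking that the constant $\frac1{2\sqrt2}$ from Theorem~\ref{thm:sdac-lower-bound} combines with the factors $\frac12$ (from the count) and $\frac14$ (from $\beta_t^2$) to give exactly $\frac18$.
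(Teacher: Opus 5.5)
Your proposal is correct and follows the paper's proof essentially step for step: you bound $\alpha^{t-2}\le\alpha^{t_{\mathrm{mix}}(\alpha)}\le\frac12$ to get $\beta_t\ge G_\alpha/2$ for $t\ge t_{\mathrm{mix}}(\alpha)+2$, count $T-t_{\mathrm{mix}}(\alpha)-1\ge T/2$ such indices, and plug into Theorem~\ref{thm:sdac-lower-bound}. One small point the paper also leaves implicit is that $t_{\mathrm{mix}}(\alpha)\ge1$, so the hypothesis forces $T\ge4$ and the theorem's requirement $T\ge2$ holds.
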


\begin{remark}[Large-$\alpha$ regime]
\label{rem:large-alpha-regime}
For fixed $L_x,L_u>0$, both $t_{\mathrm{mix}}(\alpha)$ and $G_\alpha$ are of
order $(1-\alpha)^{-1}$ as $\alpha\uparrow1$. If
$T\ge2t_{\mathrm{mix}}(\alpha)+2$, then
Corollary~\ref{cor:sdac-lower-mixing} yields
\[
\mathbb{E}_{\mathcal{A}}\!\left[\RegSDAC{T}(H)\right]
=
\Omega\!\left(\frac{\sqrt{T}}{1-\alpha}\right).
\]
For the upper bound, as $\alpha\uparrow1$,
\[
(1-\alpha)G_\alpha
=\alpha\bigl(L_x+(1-\alpha)L_u\bigr)
\longrightarrow L_x,
\]
and \eqref{eq:def-Gamma-alpha} gives
\[
\begin{aligned}
(1-\alpha)^3\Gamma_\alpha
&=\frac{1-\alpha}{2}\bigl((1-\alpha)G_\alpha\bigr)^2 \\
&\quad+\alpha\bigl((1-\alpha)G_\alpha\bigr)
(L_x+\alpha L_u) \\
&\longrightarrow L_x(L_x+L_u)>0.
\end{aligned}
\]
Thus,
\[
\sqrt{\Gamma_\alpha}
=\Theta\!\left((1-\alpha)^{-3/2}\right)
\quad\text{as }\alpha\uparrow1.
\]
The upper bound is $O((1-\alpha)^{-3/2}\sqrt{T\log(H+1)})$
by Theorem~\ref{thm:regret-bound}. Its dependence on $1-\alpha$ exceeds that of
the lower bound by a factor of $(1-\alpha)^{-1/2}$.

The lower-bound scaling requires $T$ to grow with $t_{\mathrm{mix}}(\alpha)$.
For fixed $T$,
\[
\frac{1-\alpha^{t-2}}{1-\alpha}\longrightarrow t-2
\qquad\text{as }\alpha\uparrow1,
\]
so the lower bound in \eqref{eq:sdac-lower-exact} remains finite.
\end{remark}

\section{Discussion and Extensions}
\label{sec:extensions}

We discuss three extensions. In each case, the feasibility, convexity, and
stability arguments apply with the stated changes to the policy domain and
Lipschitz constants. We omit the resulting regret bounds, whose form is unchanged
up to these constants and, in the vector settings, the chosen product norm.

\paragraph{Retention $\alpha\ge 1$}
For $\alpha\ge1$, fix $k$ such that
$\bar\alpha\triangleq\alpha-k\in(0,1)$ and write $u_t=kx_t+v_t$. Then
\[
x_{t+1}=\bar\alpha x_t-v_t+w_t.
\]
An SDAC filter for $v_t$ uses the powers $\bar\alpha^i$. If
$0\le v_t\le\bar\alpha x_t$, then
\[
kx_t\le u_t=kx_t+v_t
\le(k+\bar\alpha)x_t
=\alpha x_t,
\]
so the original feasibility constraint \eqref{eq:model-feasibility} is satisfied.
Since
\[
x_{t+1}\le\bar\alpha x_t+1,
\qquad x_1=0,
\]
induction gives $x_t\le1/(1-\bar\alpha)$. The transformed cost
$c_t(x,kx+v)$ remains convex in $(x,v)$. Thus, the analysis applies when the costs
satisfy the corresponding Lipschitz and bounded-subgradient assumptions on the
bounded domain
\[
0\le x\le\frac{1}{1-\bar\alpha},
\qquad
kx\le u\le\alpha x.
\]
For fixed $k$, the comparator class is restricted to policies
whose actions satisfy $u_t\ge kx_t$. Selecting $k$, either in advance or online,
is left for future work.

\paragraph{Vector actions and multi-task allocation}
Consider a single storage state $x_t$ with a vector action
$\mathbf{u}_t\in\mathbb{R}_{\ge0}^n$ that allocates resources across $n$ tasks,
subject to $\sum_{a=1}^n u_{t,a}\le\alpha x_t$. The dynamics are
$x_{t+1}=\alpha x_t-\sum_{a=1}^n u_{t,a}+w_t$, and the cost
$c_t(x,\mathbf{u})$ is convex in $(x,\mathbf{u})$. Replace
$\mathbf{m}\in\mathcal{M}_H$ with a matrix
$\mathbf{M}\in\mathbb{R}_{\ge0}^{n\times H}$ satisfying
\[
\sum_{a=1}^n\sum_{i=1}^H M_{a,i}\le1,
\qquad
u_{t,a}(\mathbf{M})
=\sum_{i=1}^H M_{a,i}\alpha^i w_{t-i}.
\]
Set $m_i=\sum_{a=1}^n M_{a,i}$. Then
\[
m_i\ge0,
\qquad
\sum_{i=1}^H m_i
=\sum_{a=1}^n\sum_{i=1}^H M_{a,i}
\le1,
\]
so $\mathbf{m}\in\mathcal{M}_H$. Moreover,
\[
\sum_{a=1}^n u_{t,a}(\mathbf{M})
=\sum_{i=1}^H m_i\alpha^i w_{t-i}
\le\alpha x_t(\mathbf{M}),
\]
where the inequality is the scalar SDAC feasibility bound.
The state and vector action are affine in $\mathbf{M}$, so
$c_t\bigl(x_t(\mathbf{M}),\mathbf{u}_t(\mathbf{M})\bigr)$ is convex in $\mathbf{M}$. For the
online SDAC controller, a nominal vector whose entries sum to more than
$\alpha x_t$ can be scaled proportionally so that its entries remain nonnegative
and sum to $\alpha x_t$. Under an $\ell_1$ Lipschitz bound in the action, the size of this
correction equals the excess total action, and the stability proof proceeds with
the corresponding matrix norm. OMD then runs over the matrix subprobability
simplex above.

\paragraph{Multiple storage states with diagonal dynamics}
Suppose the state is $\mathbf{x}_t\in\mathbb{R}_{\ge0}^d$ and evolves according to
$\mathbf{x}_{t+1}=\mathbf{A}\mathbf{x}_t-\mathbf{u}_t+\mathbf{w}_t$, where
$\mathbf{A}=\operatorname{diag}(\alpha_1,\ldots,\alpha_d)$,
$\alpha_s\in(0,1)$ for $s\in[d]$, the arrivals satisfy
$\mathbf{w}_t\in[0,1]^d$, and the per-storage constraints are
$0\le u_{t,s}\le\alpha_s x_{t,s}$. Assign one SDAC parameter in $\mathcal{M}_H$
to each storage state and use the powers $\alpha_s^i$ in coordinate $s$. The dynamics
and feasibility constraints then decouple by storage state, and the SDAC parameter domain
is a product of subprobability simplices, with
$x_{t,s}\le1/(1-\alpha_s)$ for every $s\in[d]$. The online SDAC controller clips
each nominal action to $\alpha_s x_{t,s}$. If the cost is jointly convex and
satisfies compatible Lipschitz and bounded-subgradient assumptions on the resulting bounded product
domain, the surrogate cost is jointly convex in all SDAC parameters and OMD applies
on the product set. With a sum of negentropy regularizers, the OMD minimization
separates by storage state, although each subgradient may depend on all storage states
and actions. Cross-coupled storage systems (nondiagonal $\mathbf{A}$) require
further analysis.

\section{Conclusion}
\label{sec:conclusion}

We studied adversarial online control of a storage system under a state-dependent
action constraint. We introduced SDAC, whose fixed policies are feasible by
construction, and developed an online SDAC controller with
$O\!\left(\sqrt{T\log(H+1)}\right)$ regret against the best fixed SDAC policy.
For $T\ge2t_{\mathrm{mix}}(\alpha)+2$, the
minimax lower bound gives
$\Omega((1-\alpha)^{-1}\sqrt{T})$ regret as $\alpha\uparrow1$.
A factor of $(1-\alpha)^{-1/2}$ remains between the upper and lower bounds
in their dependence on $1-\alpha$. The
SDAC approximation results further give
$O\!\left(\sqrt{T\log\log T}\right)$ regret against the best fixed
infinite-memory SDAC policy for a suitable choice $H=\Theta(\log T)$. The model applies directly to
energy-harvesting battery management and related settings in which storage limits
must be respected in real time.

\appendices
\section{Proofs for Simplex Disturbance-Action Control}
\label{sec:appendix-sdac-proofs}

\begin{proof}[Proof of Lemma~\ref{lem:sdac-state-decomp}]
For $1\le t<T$ and $i\in[H]$, the definitions give
\begin{align*}
r_{t+1}
&=\sum_{j=1}^{t}\alpha^{j-1}w_{t+1-j}
=w_t+\sum_{j=1}^{t-1}\alpha^j w_{t-j}
=w_t+\alpha r_t, \\
r_{t+1}(i)
&=\sum_{j=1}^{i}\alpha^{j-1}w_{t+1-j} \\
&=w_t+\sum_{j=1}^{i-1}\alpha^j w_{t-j}
=w_t+\alpha r_t(i)-\alpha^i w_{t-i}.
\end{align*}
The sums with upper limits $t-1$ and $i-1$ are empty when $t=1$ and $i=1$,
respectively.

At $t=1$, \eqref{eq:model-init} gives $x_1(\mathbf{m})=0$, while
$r_1(i)=r_1=0$ by the arrival and empty-sum conventions. Thus,
\eqref{eq:sdac-state-decomp} holds at $t=1$. Suppose it holds for some $1\le t<T$. By \eqref{eq:model-dynamics} and
\eqref{eq:def-sdac-policy},
\begin{align*}
x_{t+1}(\mathbf{m})
&=\alpha x_t(\mathbf{m})-u_t(\mathbf{m})+w_t \\
&=\alpha\left[
\sum_{i=1}^H m_i r_t(i)
+\left(1-\sum_{i=1}^H m_i\right)r_t
\right] \\
&\quad
-\sum_{i=1}^H m_i\alpha^i w_{t-i}+w_t \\
&=\sum_{i=1}^H m_i
\bigl(\alpha r_t(i)-\alpha^i w_{t-i}\bigr) \\
&\quad
+\left(1-\sum_{i=1}^H m_i\right)\alpha r_t+w_t \\
&=\sum_{i=1}^H m_i\bigl(r_{t+1}(i)-w_t\bigr) \\
&\quad
+\left(1-\sum_{i=1}^H m_i\right)(r_{t+1}-w_t)+w_t \\
&=\sum_{i=1}^H m_i r_{t+1}(i)
+\left(1-\sum_{i=1}^H m_i\right)r_{t+1}.
\end{align*}
Thus, \eqref{eq:sdac-state-decomp} follows by induction.
\end{proof}

\begin{proof}[Proof of Lemma~\ref{lem:sdac-surrogate}]
Lemma~\ref{lem:sdac-state-decomp} and \eqref{eq:def-sdac-policy} give
\begin{equation}
\label{eq:surr-affine-pair}
x_t(\mathbf{m})=r_t+\langle \boldsymbol{\phi}_t^x,\mathbf{m}\rangle,
\qquad
u_t(\mathbf{m})=\langle \boldsymbol{\phi}_t^u,\mathbf{m}\rangle.
\end{equation}
For $i\in[H]$, Lemma~\ref{lem:sdac-state-decomp} and the definitions of
$r_t(i)$ and $r_t$ give
\[
\begin{aligned}
|\phi_{t,i}^x|&=r_t-r_t(i)
=\sum_{j=i+1}^{t-1}\alpha^{j-1}w_{t-j} \\
&\le\sum_{j=i+1}^{\infty}\alpha^{j-1}
=\frac{\alpha^i}{1-\alpha}
\le\frac{\alpha}{1-\alpha}, \\
0\le\phi_{t,i}^u&=\alpha^i w_{t-i}\le\alpha^i\le\alpha.
\end{aligned}
\]
The sum is empty when $i\ge t-1$. Taking the maximum over $i\in[H]$ gives
\begin{equation}
\label{eq:surr-coefficient-norms}
\|\boldsymbol{\phi}_t^x\|_\infty\le\frac{\alpha}{1-\alpha},
\qquad
\|\boldsymbol{\phi}_t^u\|_\infty\le\alpha.
\end{equation}
Since the state--action pair in \eqref{eq:surr-affine-pair} is affine in
$\mathbf{m}$ and $c_t$ is convex, $\ell_t$ is convex.

By Lemma~\ref{lem:sdac-feasibility}, the fixed-policy state--action pairs
associated with $\mathbf{m}$ and $\mathbf{m}'$ belong to $\mathcal{D}_\alpha$.
Thus, \eqref{eq:model-lipschitz}, \eqref{eq:surr-affine-pair},
H\"older's inequality, and \eqref{eq:surr-coefficient-norms} give
\[
\begin{aligned}
|\ell_t(\mathbf{m})-\ell_t(\mathbf{m}')|
&\le
L_x|x_t(\mathbf{m})-x_t(\mathbf{m}')| \\
&\quad{}
+L_u|u_t(\mathbf{m})-u_t(\mathbf{m}')| \\
&=
L_x\bigl|\langle \boldsymbol{\phi}_t^x,\mathbf{m}-\mathbf{m}'\rangle\bigr| \\
&\quad{}
+
L_u\bigl|\langle \boldsymbol{\phi}_t^u,\mathbf{m}-\mathbf{m}'\rangle\bigr| \\
&\le
\left(\frac{\alpha L_x}{1-\alpha}+\alpha L_u\right)
\|\mathbf{m}-\mathbf{m}'\|_1 \\
&=
G_\alpha\|\mathbf{m}-\mathbf{m}'\|_1.
\end{aligned}
\]

Fix $\mathbf{m}\in\mathcal{M}_H$, and let $(\zeta_t^x,\zeta_t^u)$ and
$\mathbf{g}_t$ be defined by \eqref{eq:surr-subgradient}. For every
$\mathbf{m}'\in\mathcal{M}_H$, the subgradient inequality gives
\begin{align*}
\ell_t(\mathbf{m}')
&=c_t\bigl(x_t(\mathbf{m}'),u_t(\mathbf{m}')\bigr)
\ge c_t\bigl(x_t(\mathbf{m}),u_t(\mathbf{m})\bigr) \\
&\quad
+\zeta_t^x\bigl(x_t(\mathbf{m}')-x_t(\mathbf{m})\bigr)
+\zeta_t^u\bigl(u_t(\mathbf{m}')-u_t(\mathbf{m})\bigr) \\
&=\ell_t(\mathbf{m})
+\zeta_t^x\langle \boldsymbol{\phi}_t^x,\mathbf{m}'-\mathbf{m}\rangle
+\zeta_t^u\langle \boldsymbol{\phi}_t^u,\mathbf{m}'-\mathbf{m}\rangle \\
&=
\ell_t(\mathbf{m})+\langle \mathbf{g}_t,\mathbf{m}'-\mathbf{m}\rangle.
\end{align*}
Thus, $\mathbf{g}_t\in\partial\ell_t(\mathbf{m})$. The bounded-subgradient
assumption and \eqref{eq:surr-coefficient-norms} give
\begin{align*}
\|\mathbf{g}_t\|_\infty
&\le
|\zeta_t^x|\|\boldsymbol{\phi}_t^x\|_\infty
+|\zeta_t^u|\|\boldsymbol{\phi}_t^u\|_\infty
\le
\frac{\alpha L_x}{1-\alpha}+\alpha L_u \\
&=G_\alpha.\qedhere
\end{align*}
\end{proof}

\begin{proof}[Proof of Lemma~\ref{lem:sdac-truncation-approx}]
Since $\mathbf{m}\in\mathcal{M}_\infty$, we have $m_i\ge0$ for every $i\in[H]$ and
\[
\|\mathbf{m}_H(\mathbf{m})\|_1
=\sum_{i=1}^H m_i\le\sum_{i=1}^{\infty}m_i\le1.
\]
Thus, $\mathbf{m}_H(\mathbf{m})\in\mathcal{M}_H$.

To prove feasibility of the infinite-memory SDAC policy, truncate at length $T$.
Since $w_\tau=0$ for $\tau\le0$, for every $t\in[T]$,
\[
\begin{aligned}
u_t\bigl(\mathbf{m}_T(\mathbf{m})\bigr)
&=\sum_{i=1}^T m_i\alpha^i w_{t-i}\\
&=\sum_{i=1}^{t-1}m_i\alpha^i w_{t-i}=u_t(\mathbf{m}).
\end{aligned}
\]
The infinite-memory SDAC policy and its truncation have the same initial state
and actions, so their state trajectories coincide.
Lemma~\ref{lem:sdac-feasibility} therefore gives
\[
0\le u_t(\mathbf{m})\le\alpha x_t(\mathbf{m}),
\qquad t\in[T].
\]

For the approximation error, define
\[
\begin{aligned}
e_t^u&\triangleq
u_t(\mathbf{m})-u_t\bigl(\mathbf{m}_H(\mathbf{m})\bigr),\\
e_t^x&\triangleq
x_t\bigl(\mathbf{m}_H(\mathbf{m})\bigr)-x_t(\mathbf{m}).
\end{aligned}
\]
Again using $w_\tau=0$ for $\tau\le0$,
\[
e_t^u=\sum_{i=H+1}^{\infty}\alpha^i m_i w_{t-i},
\qquad
0\le e_t^u\le\sum_{i=H+1}^{\infty}\alpha^i m_i.
\]
Subtracting the two state recursions gives
\[
e_{t+1}^x=\alpha e_t^x+e_t^u,
\qquad e_1^x=0.
\]
Iterating this recursion gives, with an empty sum at $t=1$,
\[
\begin{aligned}
0\le e_t^x
&=\sum_{s=1}^{t-1}\alpha^{t-1-s}e_s^u\\
&\le\left(\sum_{i=H+1}^{\infty}\alpha^i m_i\right)
\sum_{s=1}^{t-1}\alpha^{t-1-s}\\
&=\left(\sum_{i=H+1}^{\infty}\alpha^i m_i\right)
\frac{1-\alpha^{t-1}}{1-\alpha}\\
&\le\frac{1}{1-\alpha}\sum_{i=H+1}^{\infty}\alpha^i m_i.
\end{aligned}
\]

Both policies are feasible, so their state--action pairs belong to
$\mathcal{D}_\alpha$. By \eqref{eq:model-lipschitz} and the triangle inequality,
\[
\begin{aligned}
&\left|J_T\bigl(\mathbf{m}_H(\mathbf{m})\bigr)-J_T(\mathbf{m})\right|\\
&\quad=\left|\sum_{t=1}^T\left[
c_t\bigl(x_t(\mathbf{m}_H(\mathbf{m})),u_t(\mathbf{m}_H(\mathbf{m}))\bigr)
\right.\right.\\
&\hspace{7em}\left.\left.
-c_t\bigl(x_t(\mathbf{m}),u_t(\mathbf{m})\bigr)\right]\right|\\
&\quad\le\sum_{t=1}^T\left(L_xe_t^x+L_ue_t^u\right)\\
&\quad\le\left(L_u+\frac{L_x}{1-\alpha}\right)T
\sum_{i=H+1}^{\infty}\alpha^i m_i.
\end{aligned}
\]
Finally, since $\alpha\in(0,1)$,
\[
\begin{aligned}
\sum_{i=H+1}^{\infty}\alpha^i m_i
&\le\alpha^{H+1}\sum_{i=H+1}^{\infty}m_i\\
&\le\alpha^{H+1}\sum_{i=1}^{\infty}m_i
\le\alpha^{H+1}.
\end{aligned}
\]
This proves \eqref{eq:sdac-truncation-cost-gap}.
\end{proof}

\begin{proof}[Proof of Corollary~\ref{cor:fraction-as-sdac}]
Since $0\le k/\alpha\le1$,
\[
m_i(k)=\frac{k}{\alpha}\left(1-\frac{k}{\alpha}\right)^{i-1}\ge0,
\qquad i\ge1.
\]
If $k=0$, then $\mathbf{m}(k)=\mathbf{0}\in\mathcal{M}_\infty$.
If $k>0$, then $0\le1-k/\alpha<1$ and
\[
\begin{aligned}
\sum_{i=1}^{\infty}m_i(k)
&=\frac{k}{\alpha}\sum_{j=0}^{\infty}
\left(1-\frac{k}{\alpha}\right)^j\\
&=\frac{k}{\alpha}\frac{1}{1-(1-k/\alpha)}
=\frac{k}{\alpha}\frac{\alpha}{k}=1.
\end{aligned}
\]
Thus, $\mathbf{m}(k)\in\mathcal{M}_\infty$ for every $k\in[0,\alpha]$.

Under the fixed-fraction policy, \eqref{eq:model-dynamics} becomes
\[
x_{t+1}(k)=(\alpha-k)x_t(k)+w_t.
\]
Iterating this recursion from $x_1(k)=0$ gives
\[
x_t(k)=\sum_{i=1}^{t-1}(\alpha-k)^{i-1}w_{t-i}.
\]
For every $i\ge1$,
\[
\begin{aligned}
\alpha^i m_i(k)
&=\alpha^i\frac{k}{\alpha}
\left(\frac{\alpha-k}{\alpha}\right)^{i-1}\\
&=k\alpha^{i-1}\frac{(\alpha-k)^{i-1}}{\alpha^{i-1}}
=k(\alpha-k)^{i-1}.
\end{aligned}
\]
Therefore,
\[
\begin{aligned}
u_t(k)&=kx_t(k)
=\sum_{i=1}^{t-1}k(\alpha-k)^{i-1}w_{t-i}\\
&=\sum_{i=1}^{t-1}\alpha^i m_i(k)w_{t-i}
=u_t\bigl(\mathbf{m}(k)\bigr).
\end{aligned}
\]
Both state trajectories start at zero and follow \eqref{eq:model-dynamics} with
the same actions. Hence, $x_t\bigl(\mathbf{m}(k)\bigr)=x_t(k)$ for every $t\in[T]$.
\end{proof}

\section{Proofs for the Online SDAC Controller}
\label{sec:appendix-algorithm-proofs}

The OMD results for unnormalized negentropy and its Bregman divergence on the
probability simplex are standard \cite{hazan2016introduction}. We prove the
corresponding forms for the subprobability simplex $\mathcal{M}_H$ used below.

\begin{lemma}[Strong convexity of negentropy on the subprobability simplex]
\label{lem:subprob-negentropy-strong-convexity}
For every $\mathbf{m}\in\mathcal{M}_H$ and
$\mathbf{v}\in\mathcal{M}_H\cap\mathbb{R}_{>0}^H$,
\begin{equation}
\label{eq:subprob-negentropy-strong-convexity}
D_\Psi(\mathbf{m}\|\mathbf{v})
\ge
\frac12\|\mathbf{m}-\mathbf{v}\|_1^2.
\end{equation}
Thus, $\Psi$ is $1$-strongly convex with respect to $\|\cdot\|_1$ on
$\mathcal{M}_H$.
\end{lemma}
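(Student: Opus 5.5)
We reduce \eqref{eq:subprob-negentropy-strong-convexity} to Pinsker's inequality on the probability simplex in dimension $H+1$ by adding a slack coordinate. Fix $\mathbf{m}\in\mathcal{M}_H$ and $\mathbf{v}\in\mathcal{M}_H\cap\mathbb{R}_{>0}^H$, and write $a\triangleq\|\mathbf{m}\|_1\le1$ and $b\triangleq\|\mathbf{v}\|_1\in(0,1]$. Define the augmented vectors $\bar{\mathbf{m}}=(m_1,\ldots,m_H,1-a)$ and $\bar{\mathbf{v}}=(v_1,\ldots,v_H,1-b)$. Both are probability vectors in $\mathbb{R}^{H+1}$.

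Since $\sum_i(m_i-v_i)=a-b$, the linear terms in \eqref{eq:def-bregman} give
\[
D_\Psi(\mathbf{m}\|\mathbf{v})=\sum_{i=1}^H m_i\log\frac{m_i}{v_i}-a+b .
\]
On the other hand,
\[
\mathrm{KL}(\bar{\mathbf{m}}\|\bar{\mathbf{v}})=\sum_{i=1}^H m_i\log\frac{m_i}{v_i}+(1-a)\log\frac{1-a}{1-b},
\]
with the conventions $0\log0=0$ and, if $b=1<a$ impossible, so that $b=1$ forces $a\le1$. The case $b=1$, $a<1$ makes the KL term infinite, but this only strengthens the comparison we need, and we will handle it separately.

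The key comparison is
\[
(1-a)\log\frac{1-a}{1-b}\ \ge\ (1-a)-(1-b)=b-a .
\]
This is the elementary inequality $p\log(p/q)\ge p-q$, which follows from $\log x\le x-1$ applied to $x=q/p$. If $b=1$ and $a<1$, the left side is $+\infty$, so we instead argue directly. In that subcase $D_\Psi(\mathbf{m}\|\mathbf{v})=\sum_i m_i\log(m_i/v_i)+1-a$. We rescale: set $\mathbf{m}'=\mathbf{m}/a$ when $a>0$ and apply the generalized Pinsker inequality for unnormalized measures. The cleanest route is to replace the bad slack by an equivalent argument, which is the main obstacle, discussed at the end.

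Given the comparison, we obtain
\[
D_\Psi(\mathbf{m}\|\mathbf{v})\ \ge\ \mathrm{KL}(\bar{\mathbf{m}}\|\bar{\mathbf{v}})\ \ge\ \tfrac12\|\bar{\mathbf{m}}-\bar{\mathbf{v}}\|_1^2\ \ge\ \tfrac12\|\mathbf{m}-\mathbf{v}\|_1^2 ,
\]
where the middle step is Pinsker's inequality and the last step drops the nonnegative slack term $|a-b|$. Strong convexity then follows from the standard identity $D_\Psi(\mathbf{m}\|\mathbf{v})=\Psi(\mathbf{m})-\Psi(\mathbf{v})-\langle\nabla\Psi(\mathbf{v}),\mathbf{m}-\mathbf{v}\rangle$ for interior $\mathbf{v}$, together with continuity up to the boundary.

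\emph{Main obstacle.} The obstacle is the degenerate case $b=1>a$, where the slack comparison breaks. Here I would avoid slack entirely and use the unnormalized Pinsker inequality: for nonnegative $\mathbf{p},\mathbf{q}$ with $\|\mathbf{p}\|_1,\|\mathbf{q}\|_1\le1$,
\[
\sum_i\Bigl(p_i\log\frac{p_i}{q_i}-p_i+q_i\Bigr)\ \ge\ \frac{\|\mathbf{p}-\mathbf{q}\|_1^2}{2\max\{\|\mathbf{p}\|_1,\|\mathbf{q}\|_1\}}\ \ge\ \tfrac12\|\mathbf{p}-\mathbf{q}\|_1^2 .
\]
I would prove this by the usual two-point reduction. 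The log-sum inequality collapses the sum onto the two sets $\{p_i\ge q_i\}$ and $\{p_i<q_i\}$, which reduces it to a two-coordinate inequality in four scalars. That inequality would then be checked by fixing $p$ and differentiating in $q$, as in the standard proof of Pinsker. This route also covers every case at once, so I would likely present it as the main proof and mention the slack-coordinate argument only as intuition.
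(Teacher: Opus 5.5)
Your slack-coordinate chain fails at its first link, and the failure is not confined to the degenerate case $b=1>a$. The inequality you prove, $(1-a)\log\frac{1-a}{1-b}\ge b-a$, yields
\[
\mathrm{KL}(\bar{\mathbf{m}}\|\bar{\mathbf{v}})
=D_\Psi(\mathbf{m}\|\mathbf{v})
+\Bigl[(1-a)\log\tfrac{1-a}{1-b}-(1-a)+(1-b)\Bigr]
\ge D_\Psi(\mathbf{m}\|\mathbf{v}).
\]
This is the reverse of the step $D_\Psi(\mathbf{m}\|\mathbf{v})\ge\mathrm{KL}(\bar{\mathbf{m}}\|\bar{\mathbf{v}})$ that you use. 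The bracket is the nonnegative Bregman term of the slack coordinate, and it vanishes only when $a=b$. Your remark that an infinite KL term ``only strengthens'' the comparison is true of this reversed inequality, not of the one you need. No case analysis can rescue the argument, because the intermediate bound $D_\Psi(\mathbf{m}\|\mathbf{v})\ge\frac12\|\bar{\mathbf{m}}-\bar{\mathbf{v}}\|_1^2$ is itself false. For $H=1$, $m=\frac12$, $v=\frac14$, one has $D_\Psi(m\|v)=\frac12\log2-\frac14\approx0.097$, which is less than $\frac18=\frac12\|\bar{\mathbf{m}}-\bar{\mathbf{v}}\|_1^2$. So the slack coordinate cannot serve even as intuition for this bound.

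Your fallback, the unnormalized Pinsker inequality with $\max\{\|\mathbf{p}\|_1,\|\mathbf{q}\|_1\}$ in the denominator, is true and would suffice. As stated, however, it is a mild strengthening of the lemma itself, and you only gesture at its proof. The log-sum reduction to two coordinates is valid, but all the content sits in the resulting four-scalar inequality. The phrase ``as in the standard proof of Pinsker'' does not transfer. That proof closes with $q(1-q)\le\frac14$, which comes from normalization. Here the corresponding step must extract the total mass, and without it the constant $\frac12$ fails: one coordinate with $p=\lambda$, $q=2\lambda$ gives $D=(1-\log2)\lambda<\lambda^2/2$ for large $\lambda$.

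The missing idea is a weighted Cauchy--Schwarz bound along the segment $\mathbf{z}_s=\mathbf{v}+s\mathbf{d}$, where $\mathbf{d}=\mathbf{m}-\mathbf{v}$. Since $\mathbf{z}_s\in\mathcal{M}_H$,
\[
\mathbf{d}^\top\nabla^2\Psi(\mathbf{z}_s)\mathbf{d}
=\sum_{i=1}^H\frac{d_i^2}{z_{s,i}}
\ge\frac{\|\mathbf{d}\|_1^2}{\|\mathbf{z}_s\|_1}
\ge\|\mathbf{d}\|_1^2 .
\]
Integrating this against $(1-s)$ in Taylor's formula gives the lemma directly in dimension $H$, with no two-point reduction. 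If you bound $\|\mathbf{z}_s\|_1\le\max\{a,b\}$ instead, you get your $\max$ version. This is the paper's proof, which handles $\mathbf{m}>0$ first and then takes a limit.

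Your ``fix $\mathbf{p}$, differentiate in $\mathbf{q}$'' plan can be completed the same way. Let $\mathbf{q}(t)=\mathbf{m}-t\mathbf{d}$ and $h(t)=D_\Psi(\mathbf{m}\|\mathbf{q}(t))-\frac{t^2}{2}\|\mathbf{d}\|_1^2$. Then $h(0)=0$, and for $t\in(0,1]$ the same inequality gives $h'(t)=t\sum_i d_i^2/q_i(t)-t\|\mathbf{d}\|_1^2\ge0$. This version even avoids the boundary limit.
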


\begin{proof}
First suppose $\mathbf{m}>0$ coordinatewise. Set
$\mathbf{d}\triangleq\mathbf{m}-\mathbf{v}$ and
$\mathbf{z}_s\triangleq\mathbf{v}+s\mathbf{d}$ for $s\in[0,1]$. Then
$\mathbf{z}_s\in\mathcal{M}_H\cap\mathbb{R}_{>0}^H$, and weighted
Cauchy--Schwarz gives
\[
\begin{aligned}
\|\mathbf{d}\|_1^2
&=\left(\sum_{i=1}^H
\frac{|d_i|}{\sqrt{z_{s,i}}}\sqrt{z_{s,i}}\right)^2
\le
\left(\sum_{i=1}^H\frac{d_i^2}{z_{s,i}}\right)
\left(\sum_{i=1}^H z_{s,i}\right) \\
&\le
\mathbf{d}^\top\nabla^2\Psi(\mathbf{z}_s)\mathbf{d}.
\end{aligned}
\]
Taylor's formula with integral remainder gives
\[
\begin{aligned}
D_\Psi(\mathbf{m}\|\mathbf{v})
&=\int_0^1(1-s)
\mathbf{d}^\top\nabla^2\Psi(\mathbf{z}_s)\mathbf{d}\,ds \\
&\ge
\|\mathbf{d}\|_1^2\int_0^1(1-s)\,ds
=\frac12\|\mathbf{d}\|_1^2.
\end{aligned}
\]
For a general $\mathbf{m}\in\mathcal{M}_H$ and $0<\varepsilon<1$, apply the bound to
$\mathbf{m}^{\varepsilon}\triangleq(1-\varepsilon)\mathbf{m}
+\varepsilon\mathbf{v}$ and let $\varepsilon\downarrow0$. The result follows
from the convention $0\log0=0$ and continuity in the first argument of
$D_\Psi$.
\end{proof}

\begin{lemma}[Action clipping as a Bregman projection]
\label{lem:availability-projection}
For $t\in[T]$, define the set of parameters compatible with the current state by
\[
\mathcal{B}_t
\triangleq
\left\{
\mathbf{m}\in\mathcal{M}_H:
\langle\boldsymbol{\phi}_t^u,\mathbf{m}\rangle
\le \alpha x_t
\right\}.
\]
Let $\mathbf{m}_t^{\mathrm{exec}}$ be the Bregman projection of
$\mathbf{m}_t$ onto $\mathcal{B}_t$:
\[
\mathbf{m}_t^{\mathrm{exec}}
\in
\operatorname*{arg\,min}_{\mathbf{m}\in\mathcal{B}_t}
D_\Psi(\mathbf{m}\|\mathbf{m}_t).
\]
Then
\[
\left\langle
\boldsymbol{\phi}_t^u,\mathbf{m}_t^{\mathrm{exec}}
\right\rangle
=
\min\!\left\{
\left\langle\boldsymbol{\phi}_t^u,\mathbf{m}_t\right\rangle,
\alpha x_t
\right\}
=u_t.
\]
\end{lemma}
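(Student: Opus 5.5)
We prove the lemma by splitting into two cases according to whether the nominal action $\langle\boldsymbol{\phi}_t^u,\mathbf{m}_t\rangle$ already satisfies the state constraint. Throughout we use that $\mathbf{m}_t\in\mathcal{M}_H\cap\mathbb{R}_{>0}^H$. This holds at $t=1$ by \eqref{eq:alg-init}. It is preserved by the multiplicative step \eqref{eq:alg-omd-mult} and the rescaling \eqref{eq:alg-omd-proj}. Consequently $D_\Psi(\cdot\|\mathbf{m}_t)$ is finite, continuous and strictly convex on $\mathcal{M}_H$. The set $\mathcal{B}_t$ is compact, convex and nonempty, since it contains $\mathbf{0}$ because $x_t\ge0$. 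Hence the projection $\mathbf{m}_t^{\mathrm{exec}}$ exists and is unique.

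\emph{Case 1: $\langle\boldsymbol{\phi}_t^u,\mathbf{m}_t\rangle\le\alpha x_t$.} Then $\mathbf{m}_t\in\mathcal{B}_t$. Since $D_\Psi(\mathbf{m}_t\|\mathbf{m}_t)=0$ and $D_\Psi\ge0$ (for instance by Lemma~\ref{lem:subprob-negentropy-strong-convexity}), strict convexity gives $\mathbf{m}_t^{\mathrm{exec}}=\mathbf{m}_t$. The claimed identity follows immediately.

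\emph{Case 2: $\langle\boldsymbol{\phi}_t^u,\mathbf{m}_t\rangle>\alpha x_t$.} Write $a\triangleq\alpha x_t\ge0$.

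First we show the constraint is active at the projection. Suppose instead that $\mathbf{p}\triangleq\mathbf{m}_t^{\mathrm{exec}}$ satisfies $\langle\boldsymbol{\phi}_t^u,\mathbf{p}\rangle<a$. Consider the segment $\mathbf{p}+s(\mathbf{m}_t-\mathbf{p})$ for $s\in[0,1]$. It lies in $\mathcal{M}_H$ by convexity, and it lies in $\mathcal{B}_t$ for small $s>0$ by continuity of the linear map. The function $s\mapsto D_\Psi(\mathbf{p}+s(\mathbf{m}_t-\mathbf{p})\|\mathbf{m}_t)$ is convex, is strictly positive at $s=0$ (because $\mathbf{p}\ne\mathbf{m}_t$, as $\mathbf{m}_t\notin\mathcal{B}_t$), and equals $0$ at $s=1$. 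It is therefore strictly smaller than its value at $s=0$ for every $s\in(0,1]$. This contradicts the minimality of $\mathbf{p}$. Hence $\langle\boldsymbol{\phi}_t^u,\mathbf{m}_t^{\mathrm{exec}}\rangle=a=\min\{\langle\boldsymbol{\phi}_t^u,\mathbf{m}_t\rangle,\alpha x_t\}$.

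Finally, $u_t$ equals this minimum by the definition \eqref{eq:alg-feasible-action}.

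The only point needing care is the boundary and degenerate case $a=0$, or $\boldsymbol{\phi}_t^u$ having zero entries, where $\mathbf{p}$ may have zero coordinates. We plan to handle it without KKT conditions. The segment argument above uses only convexity of $D_\Psi(\cdot\|\mathbf{m}_t)$ on $\mathcal{M}_H$ and its finiteness there under the convention $0\log 0=0$. Both hold because $\mathbf{m}_t>0$, so no differentiability at the boundary is required. Uniqueness of the projection is not even needed: the argument shows that every minimizer attains the value $a$.
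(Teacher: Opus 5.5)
Your proof is correct and follows the paper's argument essentially step for step. It obtains existence from compactness of $\mathcal{B}_t\ni\mathbf{0}$ and positivity of $\mathbf{m}_t$, handles the case $\mathbf{m}_t\in\mathcal{B}_t$ directly, and in the slack case reaches a contradiction by moving from the projection toward $\mathbf{m}_t$ along a segment that stays in $\mathcal{B}_t$. The only cosmetic difference is that you get the strict decrease from plain convexity together with $D_\Psi(\mathbf{m}_t^{\mathrm{exec}}\|\mathbf{m}_t)>0=D_\Psi(\mathbf{m}_t\|\mathbf{m}_t)$ instead of invoking strict convexity, so, as you note, every minimizer makes the constraint active without appeal to uniqueness.
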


\begin{proof}
The online state satisfies $x_t\ge0$, so
$\mathbf{0}\in\mathcal{B}_t$. This set is compact and convex because it is
the intersection of $\mathcal{M}_H$ and a closed halfspace. The initialization
\eqref{eq:alg-init} and update
\eqref{eq:alg-omd-mult}--\eqref{eq:alg-omd-proj} ensure that every coordinate of
$\mathbf{m}_t$ is positive. By \eqref{eq:def-bregman},
$D_\Psi(\cdot\|\mathbf{m}_t)$ is continuous on $\mathcal{M}_H$. By
Lemma~\ref{lem:subprob-negentropy-strong-convexity}, it is strictly convex and,
for every $\mathbf{m}\in\mathcal{M}_H$,
\[
D_\Psi(\mathbf{m}\|\mathbf{m}_t)
\ge \frac12\|\mathbf{m}-\mathbf{m}_t\|_1^2
\ge0.
\]
Thus, the projection exists and is unique, and equality holds only at
$\mathbf{m}=\mathbf{m}_t$.

If $\langle\boldsymbol{\phi}_t^u,\mathbf{m}_t\rangle\le\alpha x_t$, then
$\mathbf{m}_t\in\mathcal{B}_t$ and
$\mathbf{m}_t^{\mathrm{exec}}=\mathbf{m}_t$.

Now suppose $\langle\boldsymbol{\phi}_t^u,\mathbf{m}_t\rangle>\alpha x_t$.
Since $\mathbf{m}_t^{\mathrm{exec}}\in\mathcal{B}_t$,
\[
\langle\boldsymbol{\phi}_t^u,\mathbf{m}_t^{\mathrm{exec}}\rangle
\le\alpha x_t
<\langle\boldsymbol{\phi}_t^u,\mathbf{m}_t\rangle.
\]
Assume, for contradiction, that the action constraint is slack:
\[
\langle\boldsymbol{\phi}_t^u,\mathbf{m}_t^{\mathrm{exec}}\rangle
<\alpha x_t
<\langle\boldsymbol{\phi}_t^u,\mathbf{m}_t\rangle.
\]
Choose $0<\delta<1$ such that
\[
\delta\left(
\langle\boldsymbol{\phi}_t^u,\mathbf{m}_t\rangle-
\langle\boldsymbol{\phi}_t^u,\mathbf{m}_t^{\mathrm{exec}}\rangle
\right)
<
\alpha x_t-
\langle\boldsymbol{\phi}_t^u,\mathbf{m}_t^{\mathrm{exec}}\rangle.
\]
Such a $\delta$ exists because both differences are positive. Set
\[
\mathbf{m}_\delta
\triangleq
(1-\delta)\mathbf{m}_t^{\mathrm{exec}}+\delta\mathbf{m}_t.
\]
By convexity, $\mathbf{m}_\delta\in\mathcal{M}_H$, and
\[
\begin{aligned}
&\langle\boldsymbol{\phi}_t^u,\mathbf{m}_\delta\rangle \\
&=
\langle\boldsymbol{\phi}_t^u,\mathbf{m}_t^{\mathrm{exec}}\rangle
+\delta\left(
\langle\boldsymbol{\phi}_t^u,\mathbf{m}_t\rangle-
\langle\boldsymbol{\phi}_t^u,\mathbf{m}_t^{\mathrm{exec}}\rangle
\right) <\alpha x_t.
\end{aligned}
\]
Thus, $\mathbf{m}_\delta\in\mathcal{B}_t$. Since
$\mathbf{m}_t\notin\mathcal{B}_t$,
$\mathbf{m}_t^{\mathrm{exec}}\ne\mathbf{m}_t$, and hence
$D_\Psi(\mathbf{m}_t^{\mathrm{exec}}\|\mathbf{m}_t)>0$. Using strict convexity
and $D_\Psi(\mathbf{m}_t\|\mathbf{m}_t)=0$, we obtain
\begin{align*}
D_\Psi(\mathbf{m}_\delta\|\mathbf{m}_t)
&<
(1-\delta)D_\Psi(\mathbf{m}_t^{\mathrm{exec}}\|\mathbf{m}_t)
+\delta D_\Psi(\mathbf{m}_t\|\mathbf{m}_t) \\
&=
(1-\delta)D_\Psi(\mathbf{m}_t^{\mathrm{exec}}\|\mathbf{m}_t)
<
D_\Psi(\mathbf{m}_t^{\mathrm{exec}}\|\mathbf{m}_t),
\end{align*}
contradicting the definition of $\mathbf{m}_t^{\mathrm{exec}}$. Therefore, the
constraint is active:
\[
\langle\boldsymbol{\phi}_t^u,\mathbf{m}_t^{\mathrm{exec}}\rangle
=\alpha x_t.
\]
Combining the two cases with \eqref{eq:alg-feasible-action} proves the result.
\end{proof}

\begin{lemma}[Closed-form solution of the OMD update]
\label{lem:omd-closed-form}
Fix $\mathbf{m}_t\in\mathcal{M}_H\cap\mathbb{R}_{>0}^H$, $\eta>0$, and
$\mathbf{g}_t\in\mathbb{R}^H$. Define
$\widetilde{m}_{t+1,i}\triangleq m_{t,i}\exp(-\eta g_{t,i})$ for $i\in[H]$.
Then the OMD update \eqref{eq:alg-omd-argmin} has the unique solution
\[
\mathbf{m}_{t+1}
=\frac{\widetilde{\mathbf{m}}_{t+1}}
{\max\{1,\|\widetilde{\mathbf{m}}_{t+1}\|_1\}}.
\]
Moreover, $\mathbf{m}_{t+1}\in\mathcal{M}_H\cap\mathbb{R}_{>0}^H$. Thus, the
OMD iterates initialized by \eqref{eq:alg-init} remain strictly positive.
\end{lemma}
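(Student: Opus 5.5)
We will solve the convex program in \eqref{eq:alg-omd-argmin} directly through its KKT conditions. Set
\[
F(\mathbf{m})\triangleq\eta\langle\mathbf{g}_t,\mathbf{m}\rangle+D_\Psi(\mathbf{m}\|\mathbf{m}_t).
\]
This function is continuous on the compact set $\mathcal{M}_H$, so a minimizer exists. By Lemma~\ref{lem:subprob-negentropy-strong-convexity}, $D_\Psi(\cdot\|\mathbf{m}_t)$ is strictly convex on $\mathcal{M}_H$; adding the linear term preserves strict convexity, so the minimizer is unique.

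Next we rewrite the objective. Expanding \eqref{eq:def-bregman} and using $\log m_{t,i}-\eta g_{t,i}=\log\widetilde m_{t+1,i}$, we get
\[
F(\mathbf{m})=D_\Psi(\mathbf{m}\|\widetilde{\mathbf{m}}_{t+1})+\text{const},
\]
where the constant is independent of $\mathbf{m}$. This also justifies the first equality in \eqref{eq:alg-omd-proj}. It therefore suffices to compute the $D_\Psi$-projection of the strictly positive vector $\widetilde{\mathbf{m}}_{t+1}$ onto $\mathcal{M}_H$.

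The candidate is $\mathbf{m}^\star\triangleq\widetilde{\mathbf{m}}_{t+1}/\max\{1,s\}$ with $s\triangleq\|\widetilde{\mathbf{m}}_{t+1}\|_1$. We verify optimality with the first-order condition $\langle\nabla F(\mathbf{m}^\star),\mathbf{m}-\mathbf{m}^\star\rangle\ge0$ for all $\mathbf{m}\in\mathcal{M}_H$. This condition suffices for convex differentiable $F$, and $F$ is differentiable at $\mathbf{m}^\star$ because every coordinate of $\mathbf{m}^\star$ is positive. The gradient has coordinates $\log(m^\star_i/\widetilde m_{t+1,i})$.

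We split into two cases.
\begin{itemize}
\item If $s\le1$, then $\mathbf{m}^\star=\widetilde{\mathbf{m}}_{t+1}$ is feasible, the gradient vanishes, and the condition holds trivially.
\item If $s>1$, every coordinate of the gradient equals $-\log s<0$. The condition becomes $-\log s\,(\|\mathbf{m}\|_1-1)\ge0$. This holds because $\|\mathbf{m}\|_1\le1=\|\mathbf{m}^\star\|_1$.
\end{itemize}
Alternatively, both cases follow from a KKT multiplier $\lambda=\log\max\{1,s\}\ge0$ on the constraint $\langle\mathbf{1},\mathbf{m}\rangle\le1$, with complementary slackness. The nonnegativity constraints stay inactive because the entropy gradient blows up at zero.

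Finally, positivity and the induction. The vector $\widetilde{\mathbf{m}}_{t+1}$ is strictly positive because $m_{t,i}>0$ and exponentials are positive. Dividing by a positive scalar preserves positivity, and $\|\mathbf{m}_{t+1}\|_1=s/\max\{1,s\}\le1$, so $\mathbf{m}_{t+1}\in\mathcal{M}_H\cap\mathbb{R}_{>0}^H$. The initialization \eqref{eq:alg-init} lies in this set since $H/(H+1)<1$. Induction on $t$ then shows that every OMD iterate is strictly positive.

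The only delicate step is justifying the first-order optimality condition at a point that might lie on the boundary of $\mathbb{R}_{\ge0}^H$, where $\Psi$ is not differentiable. We avoid this because the candidate $\mathbf{m}^\star$ is strictly interior in every coordinate. Uniqueness then comes from strict convexity rather than from the optimality argument, so no boundary competitor needs to be examined separately.
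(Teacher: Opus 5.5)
Your proof is correct and follows the paper's structure. You rewrite the objective as $D_\Psi(\mathbf{m}\|\widetilde{\mathbf{m}}_{t+1})$ plus a constant, split on whether $s=\|\widetilde{\mathbf{m}}_{t+1}\|_1$ exceeds $1$, and get positivity by induction from \eqref{eq:alg-init}. The only difference is how you certify the candidate: you use the first-order variational inequality and get uniqueness separately from strict convexity, whereas for $s>1$ the paper writes the exact identity $D_\Psi(\mathbf{m}\|\widetilde{\mathbf{m}}_{t+1})-D_\Psi(\mathbf{m}^\star\|\widetilde{\mathbf{m}}_{t+1})=D_\Psi(\mathbf{m}\|\mathbf{m}^\star)+(1-\|\mathbf{m}\|_1)\log s$, whose last term is exactly your first-order expression, so optimality and uniqueness follow together from Lemma~\ref{lem:subprob-negentropy-strong-convexity} (and likewise for $s\le1$) without needing compactness or a sufficiency argument for first-order conditions.
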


\begin{proof}
Since $m_{t,i}>0$, we have $\widetilde{m}_{t+1,i}>0$ for every $i\in[H]$.
For any $\mathbf{m}\in\mathcal{M}_H$,
\[
m_i\log\frac{m_i}{\widetilde{m}_{t+1,i}}
=m_i\log\frac{m_i}{m_{t,i}}+\eta g_{t,i}m_i,
\]
where the identity at $m_i=0$ uses the convention $0\log0=0$. Substituting into
\eqref{eq:def-bregman} gives
\[
\begin{aligned}
&\eta\langle\mathbf{g}_t,\mathbf{m}\rangle+D_\Psi(\mathbf{m}\|\mathbf{m}_t) \\
&=\sum_{i=1}^H\left(
\eta g_{t,i}m_i+m_i\log\frac{m_i}{m_{t,i}}-m_i+m_{t,i}
\right) \\
&=\sum_{i=1}^H\left(
m_i\log\frac{m_i}{\widetilde{m}_{t+1,i}}-m_i+m_{t,i}
\right) \\
&=D_\Psi(\mathbf{m}\|\widetilde{\mathbf{m}}_{t+1})
+\sum_{i=1}^H(m_{t,i}-\widetilde{m}_{t+1,i}).
\end{aligned}
\]
Thus, it suffices to minimize
$D_\Psi(\mathbf{m}\|\widetilde{\mathbf{m}}_{t+1})$ over $\mathcal{M}_H$.

If $\|\widetilde{\mathbf{m}}_{t+1}\|_1\le1$, then
$\widetilde{\mathbf{m}}_{t+1}\in\mathcal{M}_H$.
Lemma~\ref{lem:subprob-negentropy-strong-convexity} gives
\[
D_\Psi(\mathbf{m}\|\widetilde{\mathbf{m}}_{t+1})
\ge\frac12\|\mathbf{m}-\widetilde{\mathbf{m}}_{t+1}\|_1^2\ge0.
\]
Since $D_\Psi(\widetilde{\mathbf{m}}_{t+1}\|\widetilde{\mathbf{m}}_{t+1})=0$,
$\widetilde{\mathbf{m}}_{t+1}$ is the unique minimizer.

If $\|\widetilde{\mathbf{m}}_{t+1}\|_1>1$, set
\[
\mathbf{m}^\star
\triangleq
\frac{\widetilde{\mathbf{m}}_{t+1}}
{\|\widetilde{\mathbf{m}}_{t+1}\|_1}.
\]
Then $\mathbf{m}^\star>0$ coordinatewise, $\|\mathbf{m}^\star\|_1=1$, and
\[
\widetilde{m}_{t+1,i}
=\|\widetilde{\mathbf{m}}_{t+1}\|_1m_i^\star.
\]
Thus, for every $i\in[H]$,
\[
\begin{aligned}
m_i\log\frac{m_i}{\widetilde{m}_{t+1,i}}
&=m_i\log\frac{m_i}{m_i^\star}
-m_i\log\|\widetilde{\mathbf{m}}_{t+1}\|_1, \\
m_i^\star\log\frac{m_i^\star}{\widetilde{m}_{t+1,i}}
&=-m_i^\star\log\|\widetilde{\mathbf{m}}_{t+1}\|_1.
\end{aligned}
\]
Using these identities in \eqref{eq:def-bregman} gives, for every
$\mathbf{m}\in\mathcal{M}_H$,
\[
\begin{aligned}
&D_\Psi(\mathbf{m}\|\widetilde{\mathbf{m}}_{t+1})
-D_\Psi(\mathbf{m}^\star\|\widetilde{\mathbf{m}}_{t+1}) \\
&=\sum_{i=1}^H
\left(
m_i\log\frac{m_i}{m_i^\star}-m_i+m_i^\star
\right)
\\
&\quad+\sum_{i=1}^H(m_i^\star-m_i)
\log\|\widetilde{\mathbf{m}}_{t+1}\|_1 \\
&=D_\Psi(\mathbf{m}\|\mathbf{m}^\star)
+\bigl(1-\|\mathbf{m}\|_1\bigr)
\log\|\widetilde{\mathbf{m}}_{t+1}\|_1
\ge0.
\end{aligned}
\]
Since $\|\mathbf{m}\|_1\le1$ and $\log\|\widetilde{\mathbf{m}}_{t+1}\|_1>0$,
Lemma~\ref{lem:subprob-negentropy-strong-convexity} gives the last inequality,
with equality only at $\mathbf{m}=\mathbf{m}^\star$. Thus,
$\mathbf{m}^\star$ is the unique minimizer.

In both cases, the minimizer has the stated form and is strictly positive.
Induction from \eqref{eq:alg-init} proves that every iterate is strictly positive.
\end{proof}

\section{Proofs for the Regret Upper Bounds}
\label{sec:appendix-regret-proofs}

\begin{lemma}[One-step OMD movement bound]
\label{lem:omd-move}
The OMD update \eqref{eq:alg-omd-argmin} with step size $\eta>0$ satisfies
\[
\|\mathbf{m}_{t+1}-\mathbf{m}_t\|_1
\le \eta\|\mathbf{g}_t\|_\infty
\le \eta G_\alpha.
\]
\end{lemma}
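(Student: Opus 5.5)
The plan is to use the first-order optimality condition of the OMD step \eqref{eq:alg-omd-argmin} together with the strong convexity in Lemma~\ref{lem:subprob-negentropy-strong-convexity}. The second inequality is immediate from Lemma~\ref{lem:sdac-surrogate}. That lemma gives $\|\mathbf{g}_t\|_\infty\le G_\alpha$ whenever $\mathbf{g}_t$ is evaluated at a point of $\mathcal{M}_H$, and $\mathbf{m}_t\in\mathcal{M}_H$ by Lemma~\ref{lem:omd-closed-form}. The work is therefore in the first inequality.

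\textbf{Step 1: variational inequality.} Write $F(\mathbf{m})=\eta\langle\mathbf{g}_t,\mathbf{m}\rangle+D_\Psi(\mathbf{m}\|\mathbf{m}_t)$. By Lemma~\ref{lem:omd-closed-form}, its minimizer $\mathbf{m}_{t+1}$ over $\mathcal{M}_H$ is strictly positive. Hence $\Psi$ is differentiable at $\mathbf{m}_{t+1}$, and we get, for all $\mathbf{m}\in\mathcal{M}_H$,
\[
\langle \eta\mathbf{g}_t+\nabla\Psi(\mathbf{m}_{t+1})-\nabla\Psi(\mathbf{m}_t),\ \mathbf{m}-\mathbf{m}_{t+1}\rangle\ge0 .
\]
We then take $\mathbf{m}=\mathbf{m}_t$.

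\textbf{Step 2: strong monotonicity.} Adding the two Bregman divergences gives the identity
\[
\langle\nabla\Psi(\mathbf{m}_t)-\nabla\Psi(\mathbf{m}_{t+1}),\mathbf{m}_t-\mathbf{m}_{t+1}\rangle=D_\Psi(\mathbf{m}_t\|\mathbf{m}_{t+1})+D_\Psi(\mathbf{m}_{t+1}\|\mathbf{m}_t).
\]
Both points lie in $\mathcal{M}_H\cap\mathbb{R}^H_{>0}$, so Lemma~\ref{lem:subprob-negentropy-strong-convexity} applies to each term. The right-hand side is therefore at least $\|\mathbf{m}_t-\mathbf{m}_{t+1}\|_1^2$.

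\textbf{Step 3: combine.} From the variational inequality and H\"older's inequality,
\[
\|\mathbf{m}_t-\mathbf{m}_{t+1}\|_1^2\le \eta\langle\mathbf{g}_t,\mathbf{m}_t-\mathbf{m}_{t+1}\rangle\le\eta\|\mathbf{g}_t\|_\infty\|\mathbf{m}_t-\mathbf{m}_{t+1}\|_1 .
\]
Dividing by the norm, or noting that the case of a zero norm is trivial, gives the claim.

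The main obstacle is justifying the first-order condition rigorously on the constrained set. This needs differentiability of $F$ at the minimizer and the convexity of $\mathcal{M}_H$, and both are supplied by the strict positivity in Lemma~\ref{lem:omd-closed-form}. The condition itself then follows by differentiating $F$ along the segment $\mathbf{m}_{t+1}+s(\mathbf{m}-\mathbf{m}_{t+1})$ at $s=0^+$.

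One could instead argue directly from the closed form: coordinatewise, $|e^{-\eta g}-1|$ followed by a rescaling. That route gives a constant like $e^{\eta G}-1$, not the linear bound. The variational argument avoids this and yields exactly $\eta\|\mathbf{g}_t\|_\infty$.
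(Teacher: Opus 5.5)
Your proposal is correct and follows the paper's proof essentially step for step. Like the paper, you combine the first-order optimality condition of \eqref{eq:alg-omd-argmin} with comparison point $\mathbf{m}_t$, the symmetric Bregman identity, Lemma~\ref{lem:subprob-negentropy-strong-convexity} applied in both directions, and H\"older's inequality, then take $\|\mathbf{g}_t\|_\infty\le G_\alpha$ from Lemma~\ref{lem:sdac-surrogate}; your justification of the variational inequality through the strict positivity of $\mathbf{m}_{t+1}$ (Lemma~\ref{lem:omd-closed-form}) spells out a detail the paper leaves implicit.
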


\begin{proof}
By Lemma~\ref{lem:omd-closed-form}, both iterates are positive. Applying
Lemma~\ref{lem:subprob-negentropy-strong-convexity} in both directions gives
\[
\begin{aligned}
D_\Psi(\mathbf{m}_{t+1}\|\mathbf{m}_t)
&\ge
\frac12\|\mathbf{m}_{t+1}-\mathbf{m}_t\|_1^2, \\
D_\Psi(\mathbf{m}_t\|\mathbf{m}_{t+1})
&\ge
\frac12\|\mathbf{m}_{t+1}-\mathbf{m}_t\|_1^2.
\end{aligned}
\]
Adding these bounds and using the symmetric Bregman identity gives
\[
\begin{aligned}
&\|\mathbf{m}_{t+1}-\mathbf{m}_t\|_1^2 \\
&\quad\le
D_\Psi(\mathbf{m}_{t+1}\|\mathbf{m}_t)+D_\Psi(\mathbf{m}_t\|\mathbf{m}_{t+1}) \\
&\quad=
\bigl\langle\nabla\Psi(\mathbf{m}_{t+1})-\nabla\Psi(\mathbf{m}_t),
\mathbf{m}_{t+1}-\mathbf{m}_t\bigr\rangle \\
&\quad\le
\eta\langle\mathbf{g}_t,\mathbf{m}_t-\mathbf{m}_{t+1}\rangle \\
&\quad\le
\eta\|\mathbf{g}_t\|_\infty\|\mathbf{m}_{t+1}-\mathbf{m}_t\|_1.
\end{aligned}
\]
The last two inequalities use the first-order optimality condition for
\eqref{eq:alg-omd-argmin} with comparison point $\mathbf{m}_t$ and
H\"older's inequality, respectively. Dividing by
$\|\mathbf{m}_{t+1}-\mathbf{m}_t\|_1$ gives the first bound when the iterates
differ; when they coincide, its left-hand side is zero.
Finally, $\|\mathbf{g}_t\|_\infty\le G_\alpha$ gives the uniform bound.
\end{proof}

\begin{lemma}[OMD regret for the surrogate costs]
\label{lem:omd-surrogate-regret}
For every $\mathbf{m}\in\mathcal{M}_H$, the update \eqref{eq:alg-omd-argmin} satisfies
\begin{equation}
\label{eq:omd-surrogate-regret}
\sum_{t=1}^T\ell_t(\mathbf{m}_t)
-\sum_{t=1}^T\ell_t(\mathbf{m})
\le
\frac{D_\Psi(\mathbf{m}\|\mathbf{m}_1)}{\eta}
+
\frac{\eta}{2}\sum_{t=1}^T\|\mathbf{g}_t\|_\infty^2.
\end{equation}
\end{lemma}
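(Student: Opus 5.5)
We will prove the bound by the standard OMD argument, using the three facts already established for this update: the closed form and strict positivity of the iterates (Lemma~\ref{lem:omd-closed-form}), $1$-strong convexity of $\Psi$ with respect to $\|\cdot\|_1$ on $\mathcal{M}_H$ (Lemma~\ref{lem:subprob-negentropy-strong-convexity}), and $\mathbf{g}_t\in\partial\ell_t(\mathbf{m}_t)$ (Lemma~\ref{lem:sdac-surrogate}).

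\textbf{Step 1 (linearization).} Fix $\mathbf{m}\in\mathcal{M}_H$. By convexity of $\ell_t$,
\[
\ell_t(\mathbf{m}_t)-\ell_t(\mathbf{m})\le\langle\mathbf{g}_t,\mathbf{m}_t-\mathbf{m}\rangle .
\]
So it suffices to bound the linear regret $\sum_t\langle\mathbf{g}_t,\mathbf{m}_t-\mathbf{m}\rangle$.

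\textbf{Step 2 (three-point inequality).} Since $\mathbf{m}_t,\mathbf{m}_{t+1}>0$, $\Psi$ is differentiable at both. The first-order optimality condition for \eqref{eq:alg-omd-argmin} reads
\[
\langle\eta\mathbf{g}_t+\nabla\Psi(\mathbf{m}_{t+1})-\nabla\Psi(\mathbf{m}_t),\,\mathbf{m}-\mathbf{m}_{t+1}\rangle\ge0 .
\]
Combining this with the three-point identity for Bregman divergences gives
\[
\eta\langle\mathbf{g}_t,\mathbf{m}_{t+1}-\mathbf{m}\rangle\le D_\Psi(\mathbf{m}\|\mathbf{m}_t)-D_\Psi(\mathbf{m}\|\mathbf{m}_{t+1})-D_\Psi(\mathbf{m}_{t+1}\|\mathbf{m}_t).
\]
This is the main technical point. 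The comparator $\mathbf{m}$ may have zero coordinates, so $\nabla\Psi(\mathbf{m})$ need not exist. However, the identity only involves $\nabla\Psi$ at $\mathbf{m}_t$ and $\mathbf{m}_{t+1}$, and the terms $m_i\log m_i$ cancel. One can verify it directly from \eqref{eq:def-bregman} with the convention $0\log0=0$; alternatively, perturb $\mathbf{m}$ toward $\mathbf{m}_1$ and pass to the limit as in the proof of Lemma~\ref{lem:subprob-negentropy-strong-convexity}.

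\textbf{Step 3 (per-step bound).} Add $\eta\langle\mathbf{g}_t,\mathbf{m}_t-\mathbf{m}_{t+1}\rangle$ to both sides. Bound it by H\"older's inequality and then by the inequality $ab\le\frac{a^2}{2}+\frac{b^2}{2}$:
\[
\eta\|\mathbf{g}_t\|_\infty\|\mathbf{m}_t-\mathbf{m}_{t+1}\|_1\le\frac{\eta^2}{2}\|\mathbf{g}_t\|_\infty^2+\frac12\|\mathbf{m}_t-\mathbf{m}_{t+1}\|_1^2 .
\]
By Lemma~\ref{lem:subprob-negentropy-strong-convexity}, the last term is dominated by $D_\Psi(\mathbf{m}_{t+1}\|\mathbf{m}_t)$, so it cancels against the negative divergence from Step~2. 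Hence
\[
\eta\langle\mathbf{g}_t,\mathbf{m}_t-\mathbf{m}\rangle\le D_\Psi(\mathbf{m}\|\mathbf{m}_t)-D_\Psi(\mathbf{m}\|\mathbf{m}_{t+1})+\frac{\eta^2}{2}\|\mathbf{g}_t\|_\infty^2 .
\]

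\textbf{Step 4 (telescoping).} Sum over $t\in[T]$. The divergences telescope, and we drop $-D_\Psi(\mathbf{m}\|\mathbf{m}_{T+1})\le0$. Dividing by $\eta$ and combining with Step~1 yields \eqref{eq:omd-surrogate-regret}.
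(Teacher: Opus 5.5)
Your proposal is correct and follows the paper's proof essentially step for step: linearize via $\mathbf{g}_t\in\partial\ell_t(\mathbf{m}_t)$, split off $\langle\mathbf{g}_t,\mathbf{m}_t-\mathbf{m}_{t+1}\rangle$, apply first-order optimality together with the three-point identity, absorb the movement term using strong convexity, H\"older, and Young, then telescope and drop $D_\Psi(\mathbf{m}\|\mathbf{m}_{T+1})\ge0$. Your remark that the three-point identity needs $\nabla\Psi$ only at the strictly positive iterates, so that comparators with zero coordinates are covered by the convention $0\log0=0$, is a detail the paper leaves implicit.
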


\begin{proof}
Convexity of $\ell_t$ gives
\begin{equation}
\label{eq:omd-convexity-step}
\begin{aligned}
\ell_t(\mathbf{m}_t)-\ell_t(\mathbf{m})
&\le
\langle \mathbf{g}_t,\mathbf{m}_t-\mathbf{m}\rangle
=
\langle \mathbf{g}_t,\mathbf{m}_t-\mathbf{m}_{t+1}\rangle \\
&\quad+
\langle \mathbf{g}_t,\mathbf{m}_{t+1}-\mathbf{m}\rangle.
\end{aligned}
\end{equation}
The first-order optimality condition for \eqref{eq:alg-omd-argmin}, with
comparison point $\mathbf{m}\in\mathcal{M}_H$, and the three-point Bregman
identity give
\begin{equation}
\label{eq:omd-three-point-step}
\begin{aligned}
&\eta\langle \mathbf{g}_t,\mathbf{m}_{t+1}-\mathbf{m}\rangle \\
&\quad\le
\langle \nabla\Psi(\mathbf{m}_{t+1})-\nabla\Psi(\mathbf{m}_t),\mathbf{m}-\mathbf{m}_{t+1}\rangle \\
&\quad=
D_\Psi(\mathbf{m}\|\mathbf{m}_t)
-D_\Psi(\mathbf{m}\|\mathbf{m}_{t+1}) \\
&\qquad\quad
-D_\Psi(\mathbf{m}_{t+1}\|\mathbf{m}_t).
\end{aligned}
\end{equation}
Substituting \eqref{eq:omd-three-point-step} into
\eqref{eq:omd-convexity-step} gives
\begin{equation}
\label{eq:omd-one-step-regret}
\begin{aligned}
\ell_t(\mathbf{m}_t)-\ell_t(\mathbf{m})
&\le
\frac{D_\Psi(\mathbf{m}\|\mathbf{m}_t)}{\eta}
-\frac{D_\Psi(\mathbf{m}\|\mathbf{m}_{t+1})}{\eta} \\
&\quad+
\langle \mathbf{g}_t,\mathbf{m}_t-\mathbf{m}_{t+1}\rangle
\\
&\quad
-\frac{D_\Psi(\mathbf{m}_{t+1}\|\mathbf{m}_t)}{\eta}.
\end{aligned}
\end{equation}
Lemma~\ref{lem:subprob-negentropy-strong-convexity} and H\"older's inequality give
\begin{equation}
\label{eq:omd-young-step}
\begin{aligned}
&\langle \mathbf{g}_t,\mathbf{m}_t-\mathbf{m}_{t+1}\rangle
-\frac{D_\Psi(\mathbf{m}_{t+1}\|\mathbf{m}_t)}{\eta} \\
&\qquad\le
\|\mathbf{g}_t\|_\infty\|\mathbf{m}_{t+1}-\mathbf{m}_t\|_1
-\frac{\|\mathbf{m}_{t+1}-\mathbf{m}_t\|_1^2}{2\eta} \\
&\qquad\le
\frac{\eta}{2}\|\mathbf{g}_t\|_\infty^2.
\end{aligned}
\end{equation}
The last step is Young's inequality. Substituting \eqref{eq:omd-young-step} into
\eqref{eq:omd-one-step-regret} and summing over $t$ gives
\[
\begin{aligned}
\sum_{t=1}^T\ell_t(\mathbf{m}_t)
-\sum_{t=1}^T\ell_t(\mathbf{m})
&\le
\frac{D_\Psi(\mathbf{m}\|\mathbf{m}_1)}{\eta}
-\frac{D_\Psi(\mathbf{m}\|\mathbf{m}_{T+1})}{\eta} \\
&\quad+
\frac{\eta}{2}\sum_{t=1}^T\|\mathbf{g}_t\|_\infty^2.
\end{aligned}
\]
Since $D_\Psi(\mathbf{m}\|\mathbf{m}_{T+1})\ge0$, this proves
\eqref{eq:omd-surrogate-regret}.
\end{proof}

\begin{lemma}[Positive-part representation of the dynamics]
\label{lem:positive-part-dynamics}
For $t\in[T]$, $x\ge0$, and $\mathbf{m}\in\mathcal{M}_H$, define
\begin{equation}
\label{eq:def-contractive-transition}
F_t(x;\mathbf{m})
\triangleq
\left[\alpha x-\langle\boldsymbol{\phi}_t^u,\mathbf{m}\rangle\right]_+
+w_t.
\end{equation}
Then
\begin{equation}
\label{eq:positive-part-transitions}
x_{t+1}=F_t(x_t;\mathbf{m}_t),
\qquad
x_{t+1}(\mathbf{m})=F_t(x_t(\mathbf{m});\mathbf{m}).
\end{equation}
Moreover, for all $x,x'\ge0$ and
$\mathbf{m},\mathbf{m}'\in\mathcal{M}_H$,
\begin{equation}
\label{eq:transition-contraction}
\begin{aligned}
&|F_t(x;\mathbf{m})-F_t(x';\mathbf{m}')| \\
&\qquad\le
\alpha|x-x'|
+\|\boldsymbol{\phi}_t^u\|_\infty
\|\mathbf{m}-\mathbf{m}'\|_1.
\end{aligned}
\end{equation}
\end{lemma}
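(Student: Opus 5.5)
The plan is to establish the two transition identities in \eqref{eq:positive-part-transitions} first, and then the contraction bound \eqref{eq:transition-contraction}.

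\emph{Online trajectory.} By \eqref{eq:alg-feasible-action}, $u_t=\min\{\langle\boldsymbol{\phi}_t^u,\mathbf{m}_t\rangle,\alpha x_t\}$. The elementary identity $a-\min\{b,a\}=[a-b]_+$ with $a=\alpha x_t$ and $b=\langle\boldsymbol{\phi}_t^u,\mathbf{m}_t\rangle$ gives $\alpha x_t-u_t=[\alpha x_t-\langle\boldsymbol{\phi}_t^u,\mathbf{m}_t\rangle]_+$. Substituting this into \eqref{eq:model-dynamics} yields $x_{t+1}=F_t(x_t;\mathbf{m}_t)$.

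\emph{Fixed-policy trajectory.} Here Lemma~\ref{lem:sdac-feasibility} gives $u_t(\mathbf{m})=\langle\boldsymbol{\phi}_t^u,\mathbf{m}\rangle\le\alpha x_t(\mathbf{m})$. Hence $\alpha x_t(\mathbf{m})-u_t(\mathbf{m})\ge0$, so it coincides with its positive part, and \eqref{eq:model-dynamics} gives $x_{t+1}(\mathbf{m})=F_t(x_t(\mathbf{m});\mathbf{m})$. One point needs checking: the vector $\boldsymbol{\phi}_t^u$ is the same in both cases. It depends only on past arrivals, which are common to every trajectory, so this holds.

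\emph{Contraction bound.} I would use the fact that $z\mapsto[z]_+$ is $1$-Lipschitz on $\mathbb{R}$. The $w_t$ terms cancel, and so
\[
|F_t(x;\mathbf{m})-F_t(x';\mathbf{m}')|
\le\bigl|\alpha(x-x')-\langle\boldsymbol{\phi}_t^u,\mathbf{m}-\mathbf{m}'\rangle\bigr|.
\]
Applying the triangle inequality and then H\"older's inequality, $|\langle\boldsymbol{\phi}_t^u,\mathbf{d}\rangle|\le\|\boldsymbol{\phi}_t^u\|_\infty\|\mathbf{d}\|_1$, gives \eqref{eq:transition-contraction}.

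None of these steps is a real obstacle. The only subtle point is the fixed-policy case, which needs fixed-policy feasibility (Lemma~\ref{lem:sdac-feasibility}) so that the positive part can be dropped. The online case needs no feasibility of the nominal action, because the clipping in \eqref{eq:alg-feasible-action} is exactly what the positive part encodes.
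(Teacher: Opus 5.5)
Your proposal is correct and matches the paper's proof step for step. Both use the clipping identity for the online trajectory, fixed-policy feasibility (Lemma~\ref{lem:sdac-feasibility}) to drop the positive part on the fixed-policy trajectory, and the $1$-Lipschitz property of $[\cdot]_+$ followed by the triangle and H\"older inequalities for \eqref{eq:transition-contraction}.
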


\begin{proof}
By \eqref{eq:model-dynamics} and \eqref{eq:alg-feasible-action},
\[
\begin{aligned}
x_{t+1}
&=\alpha x_t-
\min\!\left\{
\langle\boldsymbol{\phi}_t^u,\mathbf{m}_t\rangle,
\alpha x_t
\right\}+w_t \\
&=\left[
\alpha x_t-\langle\boldsymbol{\phi}_t^u,\mathbf{m}_t\rangle
\right]_++w_t
=F_t(x_t;\mathbf{m}_t).
\end{aligned}
\]
For a fixed $\mathbf{m}$, \eqref{eq:def-sdac-policy} and
Lemma~\ref{lem:sdac-feasibility} give
$\langle\boldsymbol{\phi}_t^u,\mathbf{m}\rangle
=u_t(\mathbf{m})\le\alpha x_t(\mathbf{m})$. Hence, by
\eqref{eq:model-dynamics},
\[
\begin{aligned}
x_{t+1}(\mathbf{m})
&=\alpha x_t(\mathbf{m})
-\langle\boldsymbol{\phi}_t^u,\mathbf{m}\rangle+w_t \\
&=\left[
\alpha x_t(\mathbf{m})
-\langle\boldsymbol{\phi}_t^u,\mathbf{m}\rangle
\right]_++w_t
=F_t(x_t(\mathbf{m});\mathbf{m}).
\end{aligned}
\]
Finally, since $|[z]_+-[z']_+|\le|z-z'|$,
\[
\begin{aligned}
&|F_t(x;\mathbf{m})-F_t(x';\mathbf{m}')| \\
&=\left|
\left[\alpha x-\langle\boldsymbol{\phi}_t^u,\mathbf{m}\rangle\right]_+
-\left[\alpha x'-\langle\boldsymbol{\phi}_t^u,\mathbf{m}'\rangle\right]_+
\right| \\
&\le\left|
\alpha(x-x')
-\langle\boldsymbol{\phi}_t^u,\mathbf{m}-\mathbf{m}'\rangle
\right| \\
&\le\alpha|x-x'|
+\|\boldsymbol{\phi}_t^u\|_\infty
\|\mathbf{m}-\mathbf{m}'\|_1,
\end{aligned}
\]
where the last step uses the triangle inequality and H\"older's inequality.
\end{proof}

\begin{proof}[Proof of Lemma~\ref{lem:contractive-deviation}]
Let $d_t\triangleq|x_t-x_t(\mathbf{m}_t)|$. Since both trajectories start at
zero, $d_1=0$. For $1\le t<T$, Lemma~\ref{lem:positive-part-dynamics}, applied
with the same parameter $\mathbf{m}_t$, gives
\[
\begin{aligned}
|x_{t+1}-x_{t+1}(\mathbf{m}_t)|
&=|F_t(x_t;\mathbf{m}_t)-F_t(x_t(\mathbf{m}_t);\mathbf{m}_t)| \\
&\le\alpha d_t.
\end{aligned}
\]
The affine state formula \eqref{eq:surr-affine-pair}, the coefficient bound
\eqref{eq:surr-coefficient-norms}, and Lemma~\ref{lem:omd-move} give
\[
\begin{aligned}
&|x_{t+1}(\mathbf{m}_t)-x_{t+1}(\mathbf{m}_{t+1})| \\
&\qquad=|\langle\boldsymbol{\phi}_{t+1}^x,
\mathbf{m}_t-\mathbf{m}_{t+1}\rangle| \\
&\qquad\le\|\boldsymbol{\phi}_{t+1}^x\|_\infty
\|\mathbf{m}_{t+1}-\mathbf{m}_t\|_1 \\
&\qquad\le\frac{\alpha\eta G_\alpha}{1-\alpha}.
\end{aligned}
\]
Thus, the triangle inequality yields
\[
\begin{aligned}
d_{t+1}
&\le|x_{t+1}-x_{t+1}(\mathbf{m}_t)|
+|x_{t+1}(\mathbf{m}_t)-x_{t+1}(\mathbf{m}_{t+1})| \\
&\le\alpha d_t+\frac{\alpha\eta G_\alpha}{1-\alpha}.
\end{aligned}
\]
Iterating from $d_1=0$ gives, for $2\le t\le T$,
\[
\begin{aligned}
d_t
&\le\frac{\alpha\eta G_\alpha}{1-\alpha}
\sum_{j=0}^{t-2}\alpha^j
\le\frac{\alpha\eta G_\alpha}{1-\alpha}
\sum_{j=0}^{\infty}\alpha^j
=\frac{\alpha\eta G_\alpha}{(1-\alpha)^2}.
\end{aligned}
\]
This bound also holds at $t=1$.

The clipping rule \eqref{eq:alg-feasible-action} and feasibility of each fixed
policy (Lemma~\ref{lem:sdac-feasibility}) give
\[
\begin{aligned}
0\le u_t(\mathbf{m}_t)-u_t
&=[u_t(\mathbf{m}_t)-\alpha x_t]_+
\le[\alpha x_t(\mathbf{m}_t)-\alpha x_t]_+ \\
&\le\alpha d_t.
\end{aligned}
\]
Finally, $\ell_t(\mathbf{m}_t)=c_t(x_t(\mathbf{m}_t),u_t(\mathbf{m}_t))$.
Both state--action pairs lie in $\mathcal{D}_\alpha$, so the Lipschitz bound
\eqref{eq:model-lipschitz} gives
\[
\begin{aligned}
c_t(x_t,u_t)-\ell_t(\mathbf{m}_t)
&\le L_xd_t+L_u|u_t-u_t(\mathbf{m}_t)| \\
&\le(L_x+\alpha L_u)d_t
\le\frac{\alpha\eta G_\alpha(L_x+\alpha L_u)}{(1-\alpha)^2}.
\end{aligned}
\]
Summing over $t=1,\ldots,T$ proves \eqref{eq:stability-sum-bound}.
\end{proof}

\begin{proof}[Proof of Theorem~\ref{thm:regret-bound}]
Fix $\mathbf{m}\in\mathcal{M}_H$. We first bound the initial Bregman
divergence. Since $\mathbf{m}_1=\frac{1}{H+1}\mathbf{1}$ and
$m_i\log m_i\le0$ for $m_i\in[0,1]$,
\[
\begin{aligned}
D_\Psi(\mathbf{m}\|\mathbf{m}_1)
&=\sum_{i=1}^H m_i\log m_i \\
&\quad+\|\mathbf{m}\|_1\bigl(\log(H+1)-1\bigr)
+\frac{H}{H+1} \\
&\le (1-\|\mathbf{m}\|_1)\frac{H}{H+1} \\
&\quad+\|\mathbf{m}\|_1
\left(\log(H+1)-\frac{1}{H+1}\right) \\
&\le\log(H+1).
\end{aligned}
\]
The last inequality uses $\|\mathbf{m}\|_1\in[0,1]$ and
$\frac{H}{H+1}\le\log(H+1)$, which follows from
$\log z\ge1-z^{-1}$ for $z\ge1$.

Lemma~\ref{lem:omd-surrogate-regret} and
$\|\mathbf{g}_t\|_\infty\le G_\alpha$ now give
\[
\begin{aligned}
\sum_{t=1}^T
\bigl(\ell_t(\mathbf{m}_t)-\ell_t(\mathbf{m})\bigr)
&\le\frac{D_\Psi(\mathbf{m}\|\mathbf{m}_1)}{\eta}
+\frac{\eta}{2}\sum_{t=1}^T\|\mathbf{g}_t\|_\infty^2 \\
&\le\frac{\log(H+1)}{\eta}
+\frac{\eta T G_\alpha^2}{2}.
\end{aligned}
\]
Since $\ell_t(\mathbf{m})=c_t(x_t(\mathbf{m}),u_t(\mathbf{m}))$,
adding the stability bound \eqref{eq:stability-sum-bound} yields
\[
\begin{aligned}
J_T^{\mathrm{on}}-J_T(\mathbf{m})
&=\sum_{t=1}^T\bigl(c_t(x_t,u_t)-\ell_t(\mathbf{m}_t)\bigr) \\
&\quad+\sum_{t=1}^T
\bigl(\ell_t(\mathbf{m}_t)-\ell_t(\mathbf{m})\bigr) \\
&\le\frac{\log(H+1)}{\eta}
+\frac{\eta T G_\alpha^2}{2} \\
&\quad+\frac{\alpha\eta T G_\alpha(L_x+\alpha L_u)}
{(1-\alpha)^2} \\
&=\frac{\log(H+1)}{\eta}+\eta T\Gamma_\alpha.
\end{aligned}
\]
Taking the maximum over $\mathbf{m}\in\mathcal{M}_H$ proves
\eqref{eq:regret-theorem-bound}. At the stated step size,
\[
\frac{\log(H+1)}{\eta^*}
=\eta^*T\Gamma_\alpha
=\sqrt{\Gamma_\alpha T\log(H+1)},
\]
which gives the stated regret bound.
\end{proof}

\begin{proof}[Proof of Corollary~\ref{cor:regret-vs-infinite-sdac}]
Fix an infinite-memory SDAC policy $\mathbf{m}\in\mathcal{M}_\infty$.
By Lemma~\ref{lem:sdac-truncation-approx}, its truncation satisfies
$\mathbf{m}_{H_T}(\mathbf{m})\in\mathcal{M}_{H_T}$.
The definition of $\RegSDAC{T}(H_T)$ gives
\[
\begin{aligned}
J_T^{\mathrm{on}}-J_T(\mathbf{m})
&=J_T^{\mathrm{on}}-J_T\bigl(\mathbf{m}_{H_T}(\mathbf{m})\bigr)\\
&\quad+J_T\bigl(\mathbf{m}_{H_T}(\mathbf{m})\bigr)-J_T(\mathbf{m})\\
&\le\RegSDAC{T}(H_T)\\
&\quad+J_T\bigl(\mathbf{m}_{H_T}(\mathbf{m})\bigr)-J_T(\mathbf{m}).
\end{aligned}
\]
The approximation bound \eqref{eq:sdac-truncation-cost-gap} gives
\[
J_T\bigl(\mathbf{m}_{H_T}(\mathbf{m})\bigr)-J_T(\mathbf{m})
\le\left(L_u+\frac{L_x}{1-\alpha}\right)\alpha^{H_T+1}T.
\]
This bound is independent of $\mathbf{m}$. Taking the supremum and applying
Theorem~\ref{thm:regret-bound} with memory $H_T$ and step size $\eta^*$ gives
\[
\begin{aligned}
\RegSDAC{T}(\infty)
&=\sup_{\mathbf{m}\in\mathcal{M}_\infty}
\bigl(J_T^{\mathrm{on}}-J_T(\mathbf{m})\bigr)\\
&\le\RegSDAC{T}(H_T)
+\left(L_u+\frac{L_x}{1-\alpha}\right)\alpha^{H_T+1}T\\
&\le2\sqrt{\Gamma_\alpha T\log(H_T+1)}\\
&\quad+\left(L_u+\frac{L_x}{1-\alpha}\right)\alpha^{H_T+1}T,
\end{aligned}
\]
which proves \eqref{eq:infinite-sdac-regret-bound}.

For the asymptotic rate, the definition of $H_T$ gives
\[
\frac{\log T}{\log(1/\alpha)}
\le H_T
\le 1+\frac{\log T}{\log(1/\alpha)}.
\]
Since $0<\alpha<1$, it follows that
\[
\begin{aligned}
\alpha^{H_T+1}T
&\le\alpha^{1+\log T/\log(1/\alpha)}T
=\alpha\exp\!\left(
\frac{\log\alpha}{\log(1/\alpha)}\log T\right)T \\
&=\alpha e^{-\log T}T=\alpha.
\end{aligned}
\]
Also, for fixed $\alpha$,
\[
\log(H_T+1)
\le\log\!\left(2+\frac{\log T}{\log(1/\alpha)}\right)
=O(\log\log T).
\]
Thus, for fixed $\alpha,L_x,L_u$, the right-hand side of
\eqref{eq:infinite-sdac-regret-bound} is
$O\!\left(\sqrt{T\log\log T}\right)=o(T)$.
For fixed-fraction policies, Corollary~\ref{cor:fraction-as-sdac} gives
$\mathbf{m}(k)\in\mathcal{M}_\infty$ for every $k\in[0,\alpha]$. Consequently,
\[
\begin{aligned}
\inf_{\mathbf{m}\in\mathcal{M}_\infty}J_T(\mathbf{m})
&\le\min_{k\in[0,\alpha]}J_T(\mathbf{m}(k))\\
&=\min_{k\in[0,\alpha]}\sum_{t=1}^T c_t\bigl(x_t(k),u_t(k)\bigr).
\end{aligned}
\]
Subtracting from $J_T^{\mathrm{on}}$ yields the stated fixed-fraction regret bound.
\end{proof}

\section{Proofs for the Minimax Regret Lower Bound}
\label{sec:appendix-minimax-proofs}

\begin{proof}[Proof of Theorem~\ref{thm:sdac-lower-bound}]
Fix a causal feasible controller $\mathcal{A}$. Let
$\xi_1,\ldots,\xi_T$ be independent Rademacher random variables, independent
of the controller's randomization. Set
\[
w_t=1,
\qquad
c_t(x,u)=\xi_t(L_xx-L_uu),
\qquad t\in[T].
\]
Each cost is affine with gradient $\xi_t(L_x,-L_u)$, so it is convex and
satisfies the required subgradient bounds. Also,
\[
\begin{aligned}
|c_t(x,u)-c_t(x',u')|
&=|L_x(x-x')-L_u(u-u')|\\
&\le L_x|x-x'|+L_u|u-u'|.
\end{aligned}
\]
Thus, every realization gives an admissible sequence.

Consider the two fixed SDAC parameters
\[
\mathbf{m}^{(0)}=\mathbf{0},
\qquad
\mathbf{m}^{(1)}=\mathbf{e}_1=(1,0,\ldots,0)\in\mathcal{M}_H.
\]
Both trajectories start at $(x_1,u_1)=(0,0)$. For $t\ge2$, the SDAC
action formula and the dynamics give
\[
\begin{aligned}
u_t(\mathbf{m}^{(0)})&=0,
&
x_t(\mathbf{m}^{(0)})
&=\sum_{s=0}^{t-2}\alpha^s
=\frac{1-\alpha^{t-1}}{1-\alpha},\\
u_t(\mathbf{m}^{(1)})&=\alpha,
&
x_t(\mathbf{m}^{(1)})&=1.
\end{aligned}
\]
The last identity follows from $x_2(\mathbf{m}^{(1)})=1$ and
\[
x_{t+1}(\mathbf{m}^{(1)})
=\alpha x_t(\mathbf{m}^{(1)})-\alpha+1
=\alpha-\alpha+1=1
\]
whenever $x_t(\mathbf{m}^{(1)})=1$. Hence, for $t\ge2$,
\[
\begin{aligned}
x_t(\mathbf{m}^{(0)})-x_t(\mathbf{m}^{(1)})
&=\frac{1-\alpha^{t-1}}{1-\alpha}-1
=\frac{\alpha(1-\alpha^{t-2})}{1-\alpha}.
\end{aligned}
\]
Since both policies incur zero cost at $t=1$, their cumulative costs satisfy
\[
\begin{aligned}
J_T(\mathbf{m}^{(0)})-J_T(\mathbf{m}^{(1)})
&=\sum_{t=2}^T\xi_t
\left(
\frac{\alpha L_x(1-\alpha^{t-2})}{1-\alpha}
+\alpha L_u
\right)\\
&=\sum_{t=2}^T\xi_t\beta_t.
\end{aligned}
\]

By causality, $(x_t,u_t)$ depends only on past signs and the controller's
randomization, so it is independent of $\xi_t$. Since $\mathbb{E}[\xi_t]=0$,
\[
\begin{aligned}
\mathbb{E}_{\boldsymbol{\xi},\mathcal{A}}[J_T^{\mathrm{on}}]
&=\sum_{t=1}^T
\mathbb{E}[\xi_t]\,
\mathbb{E}_{\boldsymbol{\xi},\mathcal{A}}[L_xx_t-L_uu_t]
=0.
\end{aligned}
\]
The two fixed-policy trajectories are deterministic. Thus, for $j\in\{0,1\}$,
\[
\begin{aligned}
\mathbb{E}_{\boldsymbol{\xi}}[J_T(\mathbf{m}^{(j)})]
&=\sum_{t=1}^T\mathbb{E}[\xi_t]
\bigl(L_xx_t(\mathbf{m}^{(j)})-L_uu_t(\mathbf{m}^{(j)})\bigr)\\
&=0.
\end{aligned}
\]
Both parameters belong to $\mathcal{M}_H$. Using
$\min\{a,b\}=(a+b-|a-b|)/2$, we obtain
\[
\begin{aligned}
\RegSDAC{T}(H)
&=J_T^{\mathrm{on}}-\min_{\mathbf{m}\in\mathcal{M}_H}J_T(\mathbf{m})\\
&\ge J_T^{\mathrm{on}}
-\min\{J_T(\mathbf{m}^{(0)}),J_T(\mathbf{m}^{(1)})\}\\
&=J_T^{\mathrm{on}}
-\frac{J_T(\mathbf{m}^{(0)})+J_T(\mathbf{m}^{(1)})}{2}\\
&\quad+\frac12
\left|J_T(\mathbf{m}^{(0)})-J_T(\mathbf{m}^{(1)})\right|.
\end{aligned}
\]
Taking expectations and applying Khintchine's inequality in the last line gives
\[
\begin{aligned}
\mathbb{E}_{\boldsymbol{\xi},\mathcal{A}}[\RegSDAC{T}(H)]
&\ge\frac12\mathbb{E}_{\boldsymbol{\xi}}
\left|J_T(\mathbf{m}^{(0)})-J_T(\mathbf{m}^{(1)})\right| \\
&=\frac12\mathbb{E}_{\boldsymbol{\xi}}
\left|\sum_{t=2}^T\xi_t\beta_t\right|
\ge\frac{1}{2\sqrt2}
\sqrt{\sum_{t=2}^T\beta_t^2}.
\end{aligned}
\]
Averaging over $\boldsymbol{\xi}\in\{-1,1\}^T$ shows that some
$\boldsymbol{\xi}^\star$ gives expected regret at least this bound,
where the expectation is over the controller's randomization.
Fixing $\boldsymbol{\xi}^\star$ before the interaction gives the required
deterministic oblivious sequence.
\end{proof}

\begin{proof}[Proof of Corollary~\ref{cor:sdac-lower-mixing}]
For $t=t_{\mathrm{mix}}(\alpha)+2,\ldots,T$,
\[
\alpha^{t-2}
\le\alpha^{t_{\mathrm{mix}}(\alpha)}
\le
\alpha^{\log 2/\log(1/\alpha)}
=\frac12.
\]
Hence, by \eqref{eq:sdac-lower-beta},
\[
\begin{aligned}
\beta_t
&\ge
\alpha\left(
L_u+\frac{L_x}{2(1-\alpha)}
\right)
\ge
\frac{\alpha}{2}
\left(
L_u+\frac{L_x}{1-\alpha}
\right)
=\frac{G_\alpha}{2}.
\end{aligned}
\]
There are $T-t_{\mathrm{mix}}(\alpha)-1\ge T/2$ such indices, since
$T\ge2t_{\mathrm{mix}}(\alpha)+2$. Theorem~\ref{thm:sdac-lower-bound} therefore gives
\begin{align*}
\mathbb{E}_{\mathcal{A}}\!\left[\RegSDAC{T}(H)\right]
&\ge
\frac{1}{2\sqrt{2}}\sqrt{\sum_{t=2}^T\beta_t^2} \\
&\ge
\frac{1}{2\sqrt{2}}
\left[
\bigl(T-t_{\mathrm{mix}}(\alpha)-1\bigr)
\left(\frac{G_\alpha}{2}\right)^2
\right]^{1/2} \\
&\ge
\frac{1}{2\sqrt{2}}
\left[
\frac{T}{2}\left(\frac{G_\alpha}{2}\right)^2
\right]^{1/2}
=
\frac{G_\alpha}{8}\sqrt{T}.\qedhere
\end{align*}
\end{proof}

\IEEEtriggeratref{15}
\bibliographystyle{IEEEtran}
\bibliography{ref}

\begin{thebibliography}{10}
\providecommand{\url}[1]{#1}
\csname url@samestyle\endcsname
\providecommand{\newblock}{\relax}
\providecommand{\bibinfo}[2]{#2}
\providecommand{\BIBentrySTDinterwordspacing}{\spaceskip=0pt\relax}
\providecommand{\BIBentryALTinterwordstretchfactor}{4}
\providecommand{\BIBentryALTinterwordspacing}{\spaceskip=\fontdimen2\font plus
\BIBentryALTinterwordstretchfactor\fontdimen3\font minus
  \fontdimen4\font\relax}
\providecommand{\BIBforeignlanguage}[2]{{%
\expandafter\ifx\csname l@#1\endcsname\relax
\typeout{** WARNING: IEEEtran.bst: No hyphenation pattern has been}%
\typeout{** loaded for the language `#1'. Using the pattern for}%
\typeout{** the default language instead.}%
\else
\language=\csname l@#1\endcsname
\fi
#2}}
\providecommand{\BIBdecl}{\relax}
\BIBdecl

\bibitem{agarwal2019online}
N.~Agarwal, B.~Bullins, E.~Hazan, S.~Kakade, and K.~Singh, ``Online control
  with adversarial disturbances,'' in \emph{Proceedings of the 36th
  International Conference on Machine Learning}, ser. Proceedings of Machine
  Learning Research, vol.~97.\hskip 1em plus 0.5em minus 0.4em\relax PMLR,
  2019, pp. 111--119.

\bibitem{golowich2024population}
\BIBentryALTinterwordspacing
N.~Golowich, E.~Hazan, Z.~Lu, D.~Rohatgi, and Y.~J. Sun, ``Online control in
  population dynamics,'' in \emph{Advances in Neural Information Processing
  Systems}, vol.~37, 2024, pp. 111\,571--111\,613. [Online]. Available:
  \url{https://papers.nips.cc/paper_files/paper/2024/hash/ca4f6e86453e4b117dd3263792053cf5-Abstract-Conference.html}
\BIBentrySTDinterwordspacing

\bibitem{li2021onlineConstrained}
\BIBentryALTinterwordspacing
Y.~Li, S.~Das, and N.~Li, ``Online optimal control with affine constraints,''
  \emph{Proceedings of the AAAI Conference on Artificial Intelligence},
  vol.~35, no.~10, pp. 8527--8537, 2021. [Online]. Available:
  \url{https://ojs.aaai.org/index.php/AAAI/article/view/17035}
\BIBentrySTDinterwordspacing

\bibitem{liu2023constrainedControl}
\BIBentryALTinterwordspacing
X.~Liu, Z.~Yang, and L.~Ying, ``Online nonstochastic control with adversarial
  and static constraints,'' in \emph{Proceedings of the 40th International
  Conference on Machine Learning}, ser. Proceedings of Machine Learning
  Research, vol. 202.\hskip 1em plus 0.5em minus 0.4em\relax PMLR, 2023, pp.
  22\,277--22\,288. [Online]. Available:
  \url{https://proceedings.mlr.press/v202/liu23at.html}
\BIBentrySTDinterwordspacing

\bibitem{kumar2023unbounded}
\BIBentryALTinterwordspacing
R.~Kumar, S.~Dean, and R.~Kleinberg, ``Online convex optimization with
  unbounded memory,'' 2024, arXiv:2210.09903, version 5. [Online]. Available:
  \url{https://arxiv.org/abs/2210.09903}
\BIBentrySTDinterwordspacing

\bibitem{shaviv2016universally}
\BIBentryALTinterwordspacing
D.~Shaviv and A.~{\"O}zg{\"u}r, ``Universally near optimal online power control
  for energy harvesting nodes,'' \emph{IEEE Journal on Selected Areas in
  Communications}, vol.~34, no.~12, pp. 3620--3631, 2016. [Online]. Available:
  \url{https://arxiv.org/abs/1511.00353}
\BIBentrySTDinterwordspacing

\bibitem{arafa2018online}
A.~Arafa, A.~Baknina, and S.~Ulukus, ``Online fixed fraction policies in energy
  harvesting communication systems,'' \emph{IEEE Transactions on Wireless
  Communications}, vol.~17, no.~5, pp. 2975--2986, 2018.

\bibitem{zibaeenejad2019lookahead}
\BIBentryALTinterwordspacing
A.~Zibaeenejad and J.~Chen, ``The optimal power control policy for an energy
  harvesting system with look-ahead: Bernoulli energy arrivals,'' in \emph{2019
  IEEE International Symposium on Information Theory (ISIT)}, 2019, extended
  version: arXiv:1904.12281. [Online]. Available:
  \url{https://arxiv.org/abs/1904.12281}
\BIBentrySTDinterwordspacing

\bibitem{Yu2019}
H.~Yu and M.~J. Neely, ``Learning-aided optimization for energy-harvesting
  devices with outdated state information,'' \emph{IEEE/ACM Transactions on
  Networking}, vol.~27, no.~4, pp. 1501--1514, 2019.

\bibitem{10.1145/3428337}
\BIBentryALTinterwordspacing
K.~Asgari and M.~J. Neely, ``Bregman-style online convex optimization with
  energy harvesting constraints,'' \emph{Proceedings of the ACM on Measurement
  and Analysis of Computing Systems}, vol.~4, no.~3, Nov. 2020. [Online].
  Available: \url{https://doi.org/10.1145/3428337}
\BIBentrySTDinterwordspacing

\bibitem{lin2025optimal}
Q.~Lin, J.~Su, and M.~Chen, ``Optimal algorithms for online
  {Age-of-Information} optimization in energy harvesting systems,'' \emph{IEEE
  Transactions on Networking}, vol.~33, no.~6, pp. 3146--3161, 2025.

\bibitem{tassiulas1990stability}
L.~Tassiulas and A.~Ephremides, ``Stability properties of constrained queueing
  systems and scheduling policies for maximum throughput in multihop radio
  networks,'' in \emph{29th IEEE Conference on Decision and Control}.\hskip 1em
  plus 0.5em minus 0.4em\relax IEEE, 1990, pp. 2130--2132.

\bibitem{tassiulas1993dynamic}
------, ``Dynamic server allocation to parallel queues with randomly varying
  connectivity,'' \emph{IEEE Transactions on Information Theory}, vol.~39,
  no.~2, pp. 466--478, 1993.

\bibitem{mckeown1999achieving}
N.~McKeown, A.~Mekkittikul, V.~Anantharam, and J.~Walrand, ``Achieving 100\%
  throughput in an input-queued switch,'' \emph{IEEE Transactions on
  Communications}, vol.~47, no.~8, pp. 1260--1267, 1999.

\bibitem{neely2010stochastic}
M.~J. Neely, \emph{Stochastic Network Optimization with Application to
  Communication and Queueing Systems}.\hskip 1em plus 0.5em minus 0.4em\relax
  Morgan \& Claypool Publishers, 2010.

\bibitem{khojastepour2004delay}
M.~A. Khojastepour and A.~Sabharwal, ``Delay-constrained scheduling: Power
  efficiency, filter design, and bounds,'' in \emph{IEEE INFOCOM 2004},
  vol.~3.\hskip 1em plus 0.5em minus 0.4em\relax IEEE, 2004, pp. 1938--1949.

\bibitem{yang2020online}
L.~Yang, M.~H. Hajiesmaili, R.~Sitaraman, A.~Wierman, E.~Mallada, and W.~S.
  Wong, ``Online linear optimization with inventory management constraints,''
  \emph{Proceedings of the ACM on Measurement and Analysis of Computing
  Systems}, vol.~4, no.~1, 2020.

\bibitem{hihat2024online}
\BIBentryALTinterwordspacing
M.~Hihat and A.~Fermanian, ``Online policy selection for inventory problems,''
  \emph{arXiv preprint arXiv:2411.19269}, 2024. [Online]. Available:
  \url{https://arxiv.org/abs/2411.19269}
\BIBentrySTDinterwordspacing

\bibitem{moran1954dams}
P.~A.~P. Moran, ``A probability theory of dams and storage systems,''
  \emph{Australian Journal of Applied Science}, vol.~5, no.~2, pp. 116--124,
  1954.

\bibitem{rockafellar1998variational}
R.~T. Rockafellar and R.~J.-B. Wets, \emph{Variational Analysis}, ser.
  Grund\-leh\-ren der ma\-the\-ma\-ti\-schen Wis\-sen\-schaf\-ten.\hskip 1em
  plus 0.5em minus 0.4em\relax Berlin: Springer, 1998, vol. 317.

\bibitem{hazan2016introduction}
E.~Hazan, ``Introduction to online convex optimization,'' \emph{Foundations and
  Trends in Optimization}, vol.~2, no. 3--4, pp. 157--325, 2016.

\end{thebibliography}

\end{document}